\newcommand{\be}{\begin{eqnarray}}
\newcommand{\ee}{\end{eqnarray}}
\newcommand{\bez}{\begin{eqnarray*}}
\newcommand{\eez}{\end{eqnarray*}}
\renewcommand{\d}{\mathrm{d}}
\newcommand{\bd}{\bar{\mathrm{d}}}
\newcommand{\bsy}{\boldsymbol}
\newcommand{\imag}{\mathrm{i} \,}
\theoremstyle{plain}
\newtheorem{theorem}{Theorem}[section]
\newtheorem{lemma}[theorem]{Lemma}
\newtheorem{proposition}[theorem]{Proposition}
\newtheorem{corollary}[theorem]{Corollary}
\theoremstyle{definition}
\newtheorem{remark}[theorem]{Remark}
\newtheorem{example}[theorem]{Example}
\numberwithin{equation}{section}
\numberwithin{theorem}{section}
\begin{document}

\title{\textbf{Miura transformation in bidifferential calculus \\
and a vectorial Darboux transformation for the 
Fokas-Lenells equation}} 
\author{
 {\sc Folkert M\"uller-Hoissen}$\,^a$ and {\sc Rusuo Ye}$\,^b$ \\ 
 \small 
 $^a$ Institut für Theoretische Physik, Friedrich-Hund-Platz 1,
 37077 G\"ottingen, Germany \\
 \small E-mail: folkert.mueller-hoissen@theorie.physik.uni-goettingen.de \\
 \small $^b$ College of Mathematics, Wenzhou University, Wenzhou	325035, PR China \\
 \small E-mail: rusuoye@163.com
}

\date{} 

\maketitle

\begin{abstract}
Using a general result of bidifferential calculus and recent results of other authors, a vectorial binary Darboux transformation is derived for the first member of the ``negative" part of the potential Kaup-Newell hierarchy, which is a system of two coupled Fokas-Lenells equations. Miura transformations are found 
from the latter to the first member of the negative part of the AKNS hierarchy 
and also to its ``pseudodual". 
The reduction to the Fokas-Lenells equation is implemented and exact solutions 
with a plane wave seed generated.
\end{abstract}    

\section{Introduction}
In a (by transformations of variables) simplified form \cite{Lene09}, the \emph{Fokas-Lenells equation} \cite{Fokas95,Lene+Foka09} reads\footnote{The sign 
of the last term changes via complex conjugation, $u \mapsto u^\ast$, and also via $(x,t) \mapsto (-x,-t)$. Hence 
(\ref{FLeq}) is CPT-invariant. Whereas in
the case of its relative, the nonlinear Schr\"odinger (NLS) equation, one has to distinguish between a ``focusing" and a ``defocusing" case, with qualitatively very 
different solutions, this is not the case here \cite{Lene09}. Via the formula in 
Proposition~1 in \cite{Lene09} with $\sigma=-1$, each solution of (\ref{FLeq}) determines a solution 
of the original Fokas-Lenells equation (2) in \cite{Lene09}. } 
\be
     u_{xt} - u - 2 \imag |u|^2 u_x = 0  \, , \label{FLeq}
\ee
where $u$ is a complex function of independent real variables $x$ and $t$, and a
subscript $x$ or $t$ indicates a partial derivative with respect to $x$,  respectively $t$. It is a model for the propagation of nonlinear
light pulses in monomode optical fibers, taking certain nonlinear effects into account. 
\vspace{.2cm}

The Fokas-Lenells equation belongs to the class of ``completely integrable" nonlinear partial differential equations (PDEs), which arise as integrability condition of a system of two linear equations (``Lax pair"). It is the first 
equation of the `negative' part of the derivative nonlinear Schr\"odinger (DNLS) 
hierarchy and arises as a reduction of the 
first member of the ``negative" part of the potential Kaup-Newell hierarchy
\cite{GIK80,Yang+Zeng07,Guo+Ling12,FGZ13}, 
\be
 u_{xt} - u - 2 \imag u v u_x = 0 \, , \qquad
 v_{xt} - v + 2 \imag u v v_x = 0 \, ,   \label{pKN-1}
\ee
via the constraint $v = u^\ast$. It is more convenient, however, to refer to this system as \emph{coupled Fokas-Lenells equations}.\footnote{However, it should be noticed that, in the literature, the names ``Fokas-Lenells system" or ``coupled Fokas-Lenells equations" also refer to different systems, related to original forms of the Fokas-Lenells equation \cite{ZYCW17,LFZ18,KXM18,CYSGB18,YZCBG19,Ling+Su24}, whereas (\ref{pKN-1}) is sometimes referred to as the ``pKN(-1) system".} 
\vspace{.2cm}

Each solution of the Fokas-Lenells equation determines a solution of the (also integrable) two-dimensional massive Thirring model (see Appendix~A in \cite{LWZ22} and references cited there). 
\vspace{.2cm}

Exact solutions of (\ref{FLeq}) have been 
produced by various methods, like Hirota's bilinearization \cite{Veks11,Mats12a,Mats12b,LWZ22,LZLX20,DTSN23}, inverse scattering 
\cite{Zhao+Fan21,Cheng+Fan23}, Riemann-Hilbert problem \cite{Guo+Ling12,Ai+Xu19,Zhang+Tian23,ZQH23}, algebro-geometric 
techniques \cite{ZFH13} (solutions in terms of the Riemann theta function), \cite{Zhao+Wang22} (solutions in terms of Jacobi elliptic functions), dressing and Darboux transformations \cite{Lene10,HXP12,XHCP15,ZYCW17,WXL20}.
The latter constitute a powerful method to generate large classes of exact solutions of an integrable PDE (or, more generally, a partial differential-difference equation) from given solutions \cite{Matv+Sall91}. 
A substantial improvement is a \emph{vectorial} version \cite{Manas96}. 
Multiple soliton solutions are obtained with it in a single step, using 
diagonal matrix data, whereas repeated application is needed in case of a scalar Darboux transformation. Furthermore, larger classes of solutions are reached, since the method can be applied with non-diagonal matrix data.   
\vspace{.2cm}

For the class of integrable equations possessing a ``bidifferential calculus representation" \cite{DMH00a}, there is a universal vectorial Darboux transformation. 
Given an integrable equation from this class, it is usually straight forward 
to deduce the corresponding vectorial Darboux transformation from the universal 
one.
Such an approach was taken in \cite{Ye+Zhang23}, where, however, the 
coupled Fokas-Lenells equations were only reached via a Miura transformation.\footnote{In the present context, a ``Miura transformation" is 
understood as a relation expressing the dependent variable of a differential equation in terms of the dependent variable of another differential equation, and its derivatives, implying that any solution of the latter determines a solution of the former equation. It is named after Robert M. Miura, who discovered such 
a relation between the KdV and the modified KdV equation.}	
As a consequence, the computations turned out to be overly complicated 
and the results obtained were not sufficiently general. By using observations in \cite{LLZ24}, 
more powerful results can be achieved in a much more elegant way, which also 
greatly improve those in \cite{LLZ24}, where the so-called Cauchy matrix 
approach has been taken to generate exact solutions. 
\vspace{.2cm}

Section~\ref{sec:bidiff} briefly recalls some basics of bidifferential 
calculus and a binary Darboux transformation \cite{DMH13SIGMA} in this framework. 
As suggested by the results in \cite{LLZ24}, we consider transformations of the Miura transformation equation, in bidifferential calculus, instead of the integrable equations which it connects. 
The crucial point is that the binary Darboux transformation also generates 
new solutions of the Miura transformation equation from a given 
solution.\footnote{This fact has been known to the first author and his late colleague Aristophanes Dimakis since the work in \cite{DMH13SIGMA}. But it 
has not been exploited so far.} Theorem~\ref{thm:Miura} is our first 
main result and offers applications far beyond the one presented in this work.
\vspace{.2cm}

A straight elaboration of this general result for the case of a special bidifferential calculus, in Section~\ref{sec:FLsys}, leads to our second main result, a 
vectorial binary Darboux transformation for the coupled Fokas-Lenells equations
(Theorem~\ref{thm:bDT_KN-1}). 
In Section~\ref{subsec:Miura} we show that, indeed, concrete Miura transformations are obtained from elaboration of the Miura transformation in bidifferential calculus.
 \vspace{.2cm}
 
In Section~\ref{sec:red}, the reduction to the Fokas-Lenells equation is implemented in the binary Darboux transformation. The resulting 
vectorial Darboux transformation for the Fokas-Lenells equation is our third main result (Theorem~\ref{thm:FL_DT}). We apply it in the case of vanishing seed, and in Section~\ref{sec:FLeq_pw_seed} also to the case of a plane wave seed. In this way, we recover all known classes of solitons of the Fokas-Lenells equation in a straight way.
\vspace{.2cm}

Section~\ref{sec:concl} contains some final remarks.

\section{Binary Darboux transformations in bidifferential calculus}
\label{sec:bidiff}
A \emph{graded associative algebra} is an associative algebra 
$\boldsymbol{\Omega} = \bigoplus_{r \geq 0} \boldsymbol{\Omega}^r$
over a field $\mathbb{K}$ of characteristic zero, where $\mathcal{A} := \boldsymbol{\Omega}^0$ is an associative 
algebra over $\mathbb{K}$ and $\boldsymbol{\Omega}^r$, $r \geq 1$, are $\mathcal{A}$-bimodules such that 
$\boldsymbol{\Omega}^r \, \boldsymbol{\Omega}^s \subseteq \boldsymbol{\Omega}^{r+s}$. 
Elements of $\boldsymbol{\Omega}^r$ are called $r$-forms.
A \emph{bidifferential calculus} is a unital graded associative algebra $\boldsymbol{\Omega}$, supplied
with two $\mathbb{K}$-linear graded derivations 
$\d, \bd : \boldsymbol{\Omega} \rightarrow \boldsymbol{\Omega}$
of degree one (so that $\d \boldsymbol{\Omega}^r \subseteq \boldsymbol{\Omega}^{r+1}$,
$\bd \boldsymbol{\Omega}^r \subseteq \boldsymbol{\Omega}^{r+1}$), and such that
\bez
    \d^2 = 0 \, , \qquad  \bd^2 = 0 \, , \qquad  \d \bd + \bd \d = 0  \, .   
\eez
We refer the reader to \cite{DMH20dc} for an introduction to this structure and an extensive list of references. 

The following results (also see \cite{DMH13SIGMA,CDMH16,DMH20dc}) 
express the essence of binary Darboux 
transformations that have been found for many integrable 
partial differential and difference equations. A crucial 
advantage is that, on the 
level of bidifferential calculus, proofs 
are very simple as a consequence of the simple computational 
rules and properties of $\d$ and $\bd$.

\begin{theorem}
\label{thm:A}
Given a bidifferential calculus with maps $\d, \bd$, 
let 0-forms $\Delta, \Gamma$ and 1-forms  $\kappa,\lambda$ satisfy  
\be
  &&  \bd \Delta + [\lambda , \Delta] = (\d \Delta) \, \Delta \, , \qquad  
      \bd \lambda + \lambda^2 = (\d \lambda) \, \Delta \, , \nonumber \\ 
  &&  \bd \Gamma - [\kappa , \Gamma] = \Gamma \, \d \Gamma  \, , \qquad  
      \bd \kappa - \kappa^2 = \Gamma \, \d \kappa \, .   \label{Delta,lambda,Gamma,kappa_eqs}
\ee
Let 0-forms $\theta$ and $\eta$ be solutions of the linear equations 
\be
    \bd \theta = A \, \theta + (\d \theta) \, \Delta + \theta \, \lambda \, , \qquad
   \bd \eta = - \eta \, A + \Gamma \, \d \eta + \kappa \, \eta \, ,   \label{linsys1}
\ee
where the 1-form $A$ satisfies
  \be
         \d A = 0 \, , \qquad \bd A = A^2 \, ,    \label{A_eqs}
\ee
and $\Omega$ an invertible solution of the linear system
\be
    && \Gamma \, \Omega - \Omega \, \Delta = \eta \, \theta \, ,  \label{preSylv} \\
    && \bd \Omega = (\d \Omega) \, \Delta - (\d \Gamma) \, \Omega + \kappa \, \Omega + \Omega \, \lambda + (\d \eta) \, \theta \, .
        \label{linsys2}
\ee
Then, if $\Omega$ is invertible, 
\be
        A' := A - \d (\theta \, \Omega^{-1} \eta)      \label{A'}
\ee
also solves (\ref{A_eqs}). \hfill $\Box$
\end{theorem}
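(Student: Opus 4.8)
The plan is to verify the two conditions in (\ref{A_eqs}) for $A'$ separately, starting with the trivial one. Since $\d A = 0$ by hypothesis and $\d^2 = 0$, applying $\d$ to the definition (\ref{A'}) gives $\d A' = \d A - \d^2(\theta\,\Omega^{-1}\eta) = 0$ at once, so the whole burden rests on showing $\bd A' = (A')^2$. Abbreviating $\phi := \theta\,\Omega^{-1}\eta$, so that $A' = A - \d\phi$, and using $\bd A = A^2$ together with $\bd\d = -\d\bd$, one computes $\bd A' = A^2 + \d(\bd\phi)$, while $(A')^2 = A^2 - A\,\d\phi - (\d\phi)\,A + (\d\phi)^2$. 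Hence the claim is equivalent to the single identity $\d(\bd\phi) = (\d\phi)^2 - A\,\d\phi - (\d\phi)\,A$, and I would aim squarely at this.

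The next step is to compute $\bd\phi$ explicitly. Writing $\bd\phi = (\bd\theta)\Omega^{-1}\eta - \theta\Omega^{-1}(\bd\Omega)\Omega^{-1}\eta + \theta\Omega^{-1}\bd\eta$ (using $\bd(\Omega^{-1}) = -\Omega^{-1}(\bd\Omega)\Omega^{-1}$) and substituting the linear equations (\ref{linsys1}) for $\bd\theta$ and $\bd\eta$, and (\ref{linsys2}) for $\bd\Omega$, I expect the $\lambda$- and $\kappa$-contributions to cancel pairwise. What should survive is $\bd\phi = A\phi - \phi A + R$, where $R$ collects the remaining five terms, each carrying exactly one $\d$-derivative of $\theta$, $\eta$, $\Omega$, or $\Gamma$.

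The decisive step is to recast the residual $R$ so that its $\d$-image is manifestly $(\d\phi)^2$. Setting $P := \theta\Omega^{-1}$ and invoking the algebraic constraint (\ref{preSylv}) in the form $\Omega\Delta\,\Omega^{-1} = \Gamma - \eta P$, the two $\Delta$-terms combine into $(\d P)\,\Omega\Delta\Omega^{-1}\eta = (\d P)\Gamma\eta - (\d P)\eta\,\phi$; recognising $(\d P)\Gamma\eta + P\,\d(\Gamma\eta) = \d(P\Gamma\eta)$ and $(\d P)\eta + P\,\d\eta = \d\phi$, the residual collapses to $R = \d(\theta\Omega^{-1}\Gamma\,\eta) - (\d\phi)\,\phi$. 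Applying $\d$ annihilates the exact part, and since $\d^2 = 0$ forces $\d[(\d\phi)\phi] = -(\d\phi)^2$, one obtains $\d R = (\d\phi)^2$. Finally, applying $\d$ to $\bd\phi = A\phi - \phi A + R$ and using $\d A = 0$ gives $\d(\bd\phi) = -A\,\d\phi - (\d\phi)\,A + (\d\phi)^2$, which is exactly the identity sought; therefore $\bd A' = A^2 + (\d\phi)^2 - A\,\d\phi - (\d\phi)A = (A - \d\phi)^2 = (A')^2$.

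The only genuine difficulty is bookkeeping: one must track the graded signs carefully (all of $A,\lambda,\kappa$ and $\d\theta,\d\eta,\d\Omega,\d\Gamma$ are $1$-forms, while $\theta,\eta,\Omega,\Delta,\Gamma$ are $0$-forms), organise the pairwise cancellations in $\bd\phi$, and spot the regrouping of $R$ into a single exact term plus $-(\d\phi)\phi$. It is worth emphasising that the structural equations for $\Delta,\lambda,\Gamma,\kappa$ appearing in the hypotheses are \emph{not} invoked in this algebraic verification; they serve to guarantee the consistency of the linear systems (\ref{linsys1})--(\ref{linsys2}), whereas the crucial identity $\d R = (\d\phi)^2$ rests entirely on the constraint (\ref{preSylv}), the linear equations, and the rules $\d^2 = \bd^2 = \d\bd + \bd\d = 0$.
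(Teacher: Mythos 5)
Your proof is correct. Note first that the paper itself does not prove Theorem~\ref{thm:A}: the statement is closed with a box and attributed to \cite{DMH13SIGMA,CDMH16,DMH20dc}, so there is no in-paper argument to compare against line by line. Your direct verification is, however, exactly in the spirit of the computation the paper does carry out for its Theorem~\ref{thm:Miura}: expand the $\bd$-derivative of the dressed quantity by the Leibniz rule, substitute the linear equations (\ref{linsys1}), (\ref{linsys2}), observe the pairwise cancellation of the $\lambda$- and $\kappa$-terms, and use the Sylvester-type relation (\ref{preSylv}) in the form $\Omega \Delta \Omega^{-1} = \Gamma - \eta\,\theta\,\Omega^{-1}$ to collapse what remains. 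I checked the key steps: the reduction of the claim to the single identity $\d\bd\phi = (\d\phi)^2 - A\,\d\phi - (\d\phi)\,A$ is valid (using $\bd\d = -\d\bd$ and $\d A = 0$); the five derivative terms you collect in $R$ are the correct ones; the regrouping $R = \d(\theta\,\Omega^{-1}\Gamma\,\eta) - (\d\phi)\,\phi$ holds; and the graded signs $\d(A\phi) = -A\,\d\phi$ and $\d[(\d\phi)\phi] = -(\d\phi)^2$ are handled correctly. Your closing observation is also accurate and worth keeping: the structural equations (\ref{Delta,lambda,Gamma,kappa_eqs}) play no role in the verification itself; they are only needed to guarantee that the linear systems for $\theta$, $\eta$, $\Omega$ are compatible, i.e., that the hypotheses of the theorem can actually be satisfied by nontrivial data.
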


\begin{corollary}
\label{cor:phi}
Let (\ref{Delta,lambda,Gamma,kappa_eqs}) hold and (\ref{linsys1}) with $A=\d \phi$, where the 0-form $\phi$ 
is a solution of
\be
         \bd \d \phi = \d \phi \, \d \phi \, .   \label{phi_eq}
\ee
If $\Omega$ is an invertible solution of (\ref{preSylv}) and (\ref{linsys2}), then  
\be
        \phi' = \phi - \theta \, \Omega^{-1} \eta + C \, ,
         \label{phi'}
\ee
where $C$ is any $\d$-constant (i.e., $\d C=0$), solves the same equation.
\hfill $\Box$
\end{corollary}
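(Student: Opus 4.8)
The plan is to obtain this as a direct specialization of Theorem~\ref{thm:A} to the case $A = \d\phi$. First I would observe that for an exact $1$-form $A = \d\phi$ the first equation in (\ref{A_eqs}), namely $\d A = 0$, holds automatically because $\d^2 = 0$, while the second equation $\bd A = A^2$ becomes exactly $\bd\d\phi = \d\phi\,\d\phi$, which is (\ref{phi_eq}). Hence the hypothesis that $\phi$ solves (\ref{phi_eq}) is nothing but the statement that $A = \d\phi$ solves (\ref{A_eqs}). Since (\ref{Delta,lambda,Gamma,kappa_eqs}), (\ref{linsys1}) (with this $A$), and the existence of an invertible $\Omega$ solving (\ref{preSylv}) and (\ref{linsys2}) are all assumed, every hypothesis of Theorem~\ref{thm:A} is in force.

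Next I would invoke Theorem~\ref{thm:A} to conclude that $A' = A - \d(\theta\,\Omega^{-1}\eta)$ again solves (\ref{A_eqs}). The key point is then that $A'$ is once more an exact form and is in fact $\d\phi'$ for the $\phi'$ of the corollary. Indeed, since $C$ is a $\d$-constant, one has $\d\phi' = \d\phi - \d(\theta\,\Omega^{-1}\eta) + \d C = A - \d(\theta\,\Omega^{-1}\eta) = A'$, so that $A' = \d\phi'$.

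Finally I would read off the claim. Because $A'$ solves (\ref{A_eqs}), in particular $\bd A' = (A')^2$; substituting $A' = \d\phi'$ gives $\bd\d\phi' = \d\phi'\,\d\phi'$, which is precisely (\ref{phi_eq}) with $\phi$ replaced by $\phi'$. I do not expect any genuine obstacle here: the $\d$-closedness half of (\ref{A_eqs}) is automatic for an exact form, and the only thing to check is that adding the $\d$-constant $C$ leaves $\d\phi'$ unchanged, so the freedom in $C$ is harmless. The entire content is carried by Theorem~\ref{thm:A}, the passage from $A$ to $\phi$ being simply a reformulation in terms of a $\d$-potential.
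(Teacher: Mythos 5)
Your proof is correct and is exactly the reduction the paper has in mind: the paper offers no separate argument for this corollary, stating only that it "can be regarded as a reduction of Theorem~\ref{thm:A}", which is precisely your specialization $A=\d\phi$, using $\d^2=0$ for the first half of (\ref{A_eqs}) and $\d C=0$ to identify $A'$ with $\d\phi'$. Nothing is missing.
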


\begin{corollary}
\label{cor:g}
Let (\ref{Delta,lambda,Gamma,kappa_eqs}) hold with invertible $\Delta$ and $\Gamma$, and (\ref{linsys1}) 
with $A= (\bd g) \, g^{-1}$, where $g$ is a solution of
\be
         \d ( (\bd g) \, g^{-1}) = 0 \, .       \label{g_eq}
\ee
Let $\Omega$ be an invertible solution of (\ref{preSylv}) and (\ref{linsys2}).  Then
\be
        g' = g - \theta \,  \Omega^{-1} \Gamma^{-1} \eta \, g 
         \label{g'}
\ee
solves the same equation. \hfill $\Box$
\end{corollary}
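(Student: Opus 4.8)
The plan is to recognize Corollary~\ref{cor:g} as the specialization of Theorem~\ref{thm:A} to the ansatz $A = (\bd g)\,g^{-1}$. First I would check that this ansatz is consistent with the constraints (\ref{A_eqs}). For a $0$-form $g$ one has $\bd(g^{-1}) = -g^{-1}(\bd g)\,g^{-1}$, and the graded Leibniz rule together with $\bd^2=0$ gives $\bd\big((\bd g)g^{-1}\big) = -(\bd g)\,\bd(g^{-1}) = (\bd g)g^{-1}(\bd g)g^{-1} = A^2$; thus $\bd A = A^2$ holds identically, and the only genuine condition is $\d A = \d\big((\bd g)g^{-1}\big)=0$, which is precisely (\ref{g_eq}). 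Hence $g$ solving (\ref{g_eq}) is equivalent to $A=(\bd g)g^{-1}$ solving (\ref{A_eqs}), so that Theorem~\ref{thm:A} applies verbatim with this choice of $A$.

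By Theorem~\ref{thm:A} the transformed $1$-form $A' = A - \d(\theta\,\Omega^{-1}\eta)$ again solves (\ref{A_eqs}); in particular $\d A' = 0$. The whole corollary therefore reduces to the single claim that $A'$ is again of pure-gauge form for the proposed $g'$, namely
\[ (\bd g')\,(g')^{-1} = (\bd g)\,g^{-1} - \d(\theta\,\Omega^{-1}\eta) \, . \]
Once this identity is established, $\d A'=0$ reads $\d\big((\bd g')(g')^{-1}\big)=0$, which is exactly (\ref{g_eq}) for $g'$, completing the proof. (The companion relation $\bd A' = (A')^2$ is then automatic, just as in the first step.)

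To prove the identity I would factor the proposed solution (\ref{g'}) as $g' = P\,g$ with $P := 1 - \theta\,\Omega^{-1}\Gamma^{-1}\eta$. The graded Leibniz rule gives $(\bd g')(g')^{-1} = (\bd P)P^{-1} + P\,A\,P^{-1}$, so the claim is equivalent to $\bd P = [A,P] - \d(\theta\,\Omega^{-1}\eta)\,P$, i.e.\ to a first-order expression for $\bd\big(\theta\,\Omega^{-1}\Gamma^{-1}\eta\big)$. I would expand this $\bd$-derivative by Leibniz, using $\bd(\Omega^{-1})=-\Omega^{-1}(\bd\Omega)\Omega^{-1}$ and $\bd(\Gamma^{-1})=-\Gamma^{-1}(\bd\Gamma)\Gamma^{-1}$, and then substitute the linear equations (\ref{linsys1}) for $\bd\theta$ and $\bd\eta$, equation (\ref{linsys2}) for $\bd\Omega$, and the structure equations (\ref{Delta,lambda,Gamma,kappa_eqs}) for $\bd\Gamma$ and $\bd\kappa$. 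The Sylvester-type relation (\ref{preSylv}), $\Gamma\Omega - \Omega\Delta = \eta\theta$, is then used to collapse the remaining products, and invertibility of $\Gamma$ (and $\Delta$) is what makes the intermediate cancellations possible; this is precisely the reason the extra factor $\Gamma^{-1}$ must appear in (\ref{g'}), in contrast to the $\theta\,\Omega^{-1}\eta$ occurring in Corollary~\ref{cor:phi}.

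The main obstacle is exactly this last verification: the bookkeeping of the graded signs and the careful use of (\ref{preSylv}) and (\ref{linsys2}) to cancel the many cross terms produced by $\bd(\theta\,\Omega^{-1}\Gamma^{-1}\eta)$, while correctly accounting for the $\d$-terms in $\d(\theta\,\Omega^{-1}\eta)$ (which one controls by differentiating (\ref{preSylv}) to express the combination $\Gamma\,\d\Omega - (\d\Omega)\Delta$ in known quantities). Everything else is structural and follows directly once the ansatz has been matched to Theorem~\ref{thm:A}.
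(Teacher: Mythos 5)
Your structural reduction is sound, and it is in essence the paper's own route: the paper obtains Corollary~\ref{cor:g} as a consequence of Theorem~\ref{thm:Miura}, and the gauge identity you isolate, $(\bd g')\,(g')^{-1} = A - \d(\theta\,\Omega^{-1}\eta)$, is exactly what the proof of that theorem establishes, read with $A$ in place of $\d\phi$. This substitution is legitimate because the Miura hypothesis enters that proof only through $\bd g - \d\phi\, g = 0$ and through $\bd\theta = \d\phi\,\theta + (\d\theta)\Delta + \theta\lambda$, and both survive the replacement $\d\phi \mapsto A = (\bd g)g^{-1}$ (the first then holds identically); your version is in fact slightly cleaner, since it does not presuppose the existence of a potential $\phi$ with $\d\phi = A$. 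Your preliminary checks are also correct: $\bd A = A^2$ holds identically by the graded Leibniz rule, and $\d A' = \d A - \d^2(\theta\,\Omega^{-1}\eta) = 0$ is immediate — note that for the corollary you therefore do not need Theorem~\ref{thm:A} at all; the gauge identity is the only nontrivial input.

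That, however, is also the gap: the verification of $\bd P = [A,P] - \d(\theta\,\Omega^{-1}\eta)\,P$, equivalently $\bd g' = A'\,g'$, is the entire substance of the proof, and you explicitly defer it (``the main obstacle''), merely listing the tools to be used. Asserting that the cross terms ``collapse'' via (\ref{preSylv}), (\ref{linsys1}), (\ref{linsys2}) is a plan, not a proof. The plan does close — expanding $\bd g'$ by Leibniz, substituting (\ref{linsys1}) for $\bd\theta$ and $\bd\eta$, (\ref{linsys2}) for $\bd\Omega$, and (\ref{Delta,lambda,Gamma,kappa_eqs}) for $\bd\Gamma$, one is left with
\bez
 \bd g' - A' g' = \Big( \theta \Omega^{-1} \d \Omega \, \Delta - \d \theta \, \Delta
          - \d \theta \, \Omega^{-1} \eta \theta + \theta \Omega^{-1} \d \Omega \, \Omega^{-1} \eta \theta
                 \Big) \, \Omega^{-1} \Gamma^{-1} \eta \, g
        + ( \d \theta \, \Omega^{-1}  - \theta \, \Omega^{-1} \d \Omega \, \Omega^{-1} ) \, \eta \, g \, ,
\eez
which vanishes upon substituting $\eta\theta = \Gamma\Omega - \Omega\Delta$, since $\Omega^{-1}\Gamma\Omega\,\Omega^{-1}\Gamma^{-1} = \Omega^{-1}$ — but this cancellation is precisely what a complete proof must exhibit. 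Two smaller points: your factorization $g' = P g$ uses $P^{-1}$, i.e.\ tacitly assumes $g'$ invertible; it is cleaner to prove $\bd g' = A' g'$ directly and divide by $g'$ only where it is invertible. And invertibility of $\Delta$ plays no role in the cancellation (contrary to your parenthetical remark); only $\Gamma^{-1}$ is needed, both for $g'$ to be defined and for the final collapse above.
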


The results in Corollary \ref{cor:phi} and Corollary \ref{cor:g} can be regarded as reductions of that in Theorem \ref{thm:A}. 
The above results remain true if the objects are matrices of forms of appropriate 
sizes, so that all appearing products and also the actions of $\d$ and $\bd$ are defined. 

An additional result that has not yet been paid attention to, is the following. 

\begin{theorem}
	\label{thm:Miura}
Let the conditions in Theorem~\ref{thm:A} hold. Let $\phi$ and an invertible $g$ satisfy the Miura tranformation equation
\be
      (\bd g) \, g^{-1} = \d \phi       \label{Miura}
\ee
(which connects the two equations (\ref{phi_eq}) and (\ref{g_eq}), and has both as 
integrability condition).
Then also $(\phi',g')$, given by (\ref{phi'}) and (\ref{g'}), solve this equation. 
\end{theorem}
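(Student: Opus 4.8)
The plan is to assume the Miura relation $(\bd g)\,g^{-1} = \d\phi$ holds for the given pair $(\phi,g)$, and then show that the transformed pair $(\phi',g')$ satisfies the identical relation. Since Corollary~\ref{cor:phi} already guarantees that $\phi'$ solves (\ref{phi_eq}) and Corollary~\ref{cor:g} guarantees that $g'$ solves (\ref{g_eq}), the only thing left to establish is the \emph{compatibility} between these two transformed objects, i.e.\ that the specific $\phi'$ produced by (\ref{phi'}) and the specific $g'$ produced by (\ref{g'}) are linked by $(\bd g')\,(g')^{-1} = \d\phi'$. The natural strategy is therefore a direct computation: compute $(\bd g')\,(g')^{-1}$ and $\d\phi'$ separately and show they agree, using the Miura relation for the seed $(\phi,g)$ together with the defining equations (\ref{Delta,lambda,Gamma,kappa_eqs}), (\ref{linsys1}), (\ref{A_eqs}), (\ref{preSylv}), (\ref{linsys2}).

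Concretely, I would first observe that Theorem~\ref{thm:A} is the common engine: with $A = \d\phi = (\bd g)\,g^{-1}$ (the two expressions coinciding precisely by the Miura hypothesis), the transformed form $A'$ given by (\ref{A'}) again solves (\ref{A_eqs}). The key point is that the \emph{same} $A'$ can be reached from both reductions. On the $\phi$-side, $A' = \d\phi'$ with $\phi'$ as in (\ref{phi'}), since $A' = A - \d(\theta\,\Omega^{-1}\eta) = \d(\phi - \theta\,\Omega^{-1}\eta + C) = \d\phi'$ (the $\d$-constant $C$ drops out under $\d$). On the $g$-side, the content of Corollary~\ref{cor:g} is precisely that $A' = (\bd g')\,(g')^{-1}$ for the $g'$ of (\ref{g'}). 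Hence both reductions, when started from the \emph{same} seed $A$, produce the \emph{same} $A'$, and reading off the two representations of $A'$ gives $(\bd g')\,(g')^{-1} = A' = \d\phi'$, which is exactly the Miura equation for the transformed pair.

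The main obstacle—and the step that requires genuine verification rather than bookkeeping—is confirming that the two corollaries really are built on one common $A'$, i.e.\ that the expression $\theta\,\Omega^{-1}\Gamma^{-1}\eta\,g$ appearing in (\ref{g'}) produces, via $(\bd g')\,(g')^{-1}$, exactly the correction $\d(\theta\,\Omega^{-1}\eta)$ subtracted in (\ref{A'}). This is the heart of the matter: it hinges on the identity (\ref{preSylv}), $\Gamma\,\Omega - \Omega\,\Delta = \eta\,\theta$, and on (\ref{linsys2}) governing $\bd\Omega$, which are exactly the data needed to relate the $\Gamma^{-1}$-weighted projector in $g'$ to the unweighted projector $\theta\,\Omega^{-1}\eta$ in $A'$. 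Since Corollary~\ref{cor:g} has already been asserted, this computation is available to us; the proof of Theorem~\ref{thm:Miura} then only needs to invoke the fact that one common $A'$ underlies both results, rather than to redo it.

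I would therefore keep the written proof short: state that the hypotheses make $A := \d\phi = (\bd g)\,g^{-1}$ a legitimate seed for Theorem~\ref{thm:A}, apply Corollary~\ref{cor:phi} and Corollary~\ref{cor:g} to this single $A$, and note that they yield the two representations $\d\phi' = A' = (\bd g')\,(g')^{-1}$ of one and the same transformed $A'$, whence (\ref{Miura}) holds for $(\phi',g')$. The only subtlety worth flagging explicitly is the role of the $\d$-constant $C$: it is annihilated by $\d$, so it does not disturb $\d\phi' = A'$, and one is free to fix it (e.g.\ $C=0$) without affecting the Miura relation.
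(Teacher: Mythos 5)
Your overall strategy is sound in outline, and your $\phi$-side observation is correct and essentially trivial: since $A=\d\phi$, the definition (\ref{A'}) gives $A' = \d\phi - \d(\theta\,\Omega^{-1}\eta) = \d\phi'$, the $\d$-constant $C$ being annihilated by $\d$. But there is a genuine gap on the $g$-side, and it sits exactly at the point you yourself call ``the heart of the matter.'' You claim the identity $(\bd g')\,(g')^{-1} = A'$ is ``available'' because Corollary~\ref{cor:g} has been asserted. It is not. The statement of Corollary~\ref{cor:g} is only that $g'$ solves (\ref{g_eq}), i.e.\ $\d\big((\bd g')\,(g')^{-1}\big) = 0$; it says nothing about \emph{which} $\d$-closed 1-form $(\bd g')\,(g')^{-1}$ actually is. Knowing $\d A' = 0$ (from Theorem~\ref{thm:A}) and $\d\big((\bd g')\,(g')^{-1}\big) = 0$ (from the corollary) does not force $A' = (\bd g')\,(g')^{-1}$: two $\d$-closed forms need not coincide. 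So your argument reduces the theorem to an unproved intermediate identity rather than proving it; the corollaries, as stated, are strictly weaker than what you need.

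That missing identity is precisely what the paper's proof establishes, by direct computation: using the derivation rule for $\bd$ on 0-forms together with (\ref{Delta,lambda,Gamma,kappa_eqs}), (\ref{linsys1}) and (\ref{linsys2}), one expands $\bd g'$ for $g'$ as in (\ref{g'}), and then shows that $\bd g' - (\d\phi')\,g'$ vanishes after substituting $\eta\,\theta = \Gamma\,\Omega - \Omega\,\Delta$ from (\ref{preSylv}). That computation is the entire nontrivial content of Theorem~\ref{thm:Miura}. Note also the paper's footnote: the fact that the binary Darboux transformation preserves the Miura equation, while implicit in the machinery of \cite{DMH13SIGMA}, ``has not been exploited so far'' --- i.e.\ it is not a stated consequence of the earlier results and cannot simply be cited. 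Your proposal correctly locates where the real work lies, but then skips it, so as written it does not constitute a proof; carrying out the verification you deferred would reproduce, almost line by line, the paper's own argument.
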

\begin{proof}
Using the derivation rule for $\bd$ on 0-forms and (\ref{Delta,lambda,Gamma,kappa_eqs}), (\ref{linsys1}), we obtain
\bez
   \bd g' &=& \bd g - \theta \Omega^{-1} \Gamma^{-1} \eta \, ( \bd g - \d \phi \, g )
    -  \theta \Omega^{-1} \d \eta \, g + \theta \Omega^{-1} (  \d \Omega \, \Delta \Omega^{-1} + \theta \, \Omega^{-1} \d \eta \, \theta )  \,   
       \Omega^{-1} \Gamma^{-1} \eta \, g \\
  &&  - ( \d \phi \, \theta + \d \theta \, \Delta ) \, \Omega^{-1} \Gamma^{-1} \eta \, g  \, , 
\eez
where the bracket in the second term on the right hand side vanishes by assumption. Then
\bez
     \bd g' - (\d \phi') \, g' &=& \Big( \theta \Omega^{-1} \d \Omega \, \Delta - \d \theta \, \Delta 
          - \d \theta \, \Omega^{-1} \eta \theta + \theta \Omega^{-1} \d \Omega \, \Omega^{-1} \eta \theta 
                 \Big) \, \Omega^{-1} \Gamma^{-1} \eta \, g  \\
       && + ( \d \theta \, \Omega^{-1}  - \theta \, \Omega^{-1} \d \Omega \, \Omega^{-1} ) \, \eta \, g 
\eez
vanishes by substituting for $\eta \, \theta$ the left hand side of (\ref{preSylv}).
\end{proof}

Under the conditions of Theorem~\ref{thm:Miura}, Corollary~\ref{cor:phi} and 
Corollary~\ref{cor:g} are direct consequences of the last theorem. It should be mentioned, however, that 
only for special bidifferential calculi (\ref{Miura}) leads to a meaningful equation, whereas (\ref{phi_eq}) and (\ref{g_eq}) have a better standing in this respect. 
If, for some bidifferential calculus, (\ref{Miura}) results in an equation possessing nontrivial solutions, 
the relation between (\ref{phi_eq}) and (\ref{g_eq}), expressed by it, is sometimes called ``pseudoduality".  

A relevant application of Theorem~\ref{thm:Miura} is provided in Section~\ref{sec:FLsys}, based on \cite{LLZ24}.

\section{Application to the coupled Fokas-Lenells equations}
\label{sec:FLsys}
Let $\mathcal{A}$ be the associative algebra of smooth functions of independent real variables $x$ and $t$. 
We choose $\bsy{\Omega} = \mathrm{Mat}(\mathcal{A}) \otimes \bigwedge(\mathbb{C}^2)$,
where $\mathrm{Mat}(\mathcal{A})$ is the algebra of all matrices\footnote{The product of two matrices is set to zero if the sizes do not match.} 
over $\mathcal{A}$, and a basis $\xi_1,\xi_2$ of the Grassmann algebra $\bigwedge(\mathbb{C}^2)$. It is sufficient to define $\d$ and $\bd$ on $\mathrm{Mat}(\mathcal{A})$. Then they extend in an obvious way to $\Omega$ (treating $\xi_1,\xi_2$ as constants).

Let $\phi$ and $g$ be $2 \times 2$ matrices over $\mathcal{A}$. For each $n>1$, let $J_n$ be a constant $n \times n$ matrix such that $J_n^2 = I_n$ (the $n \times n$ identity matrix) and $J_n \neq I_n$. 
For an $m \times n$ matrix $F$ over $\mathcal{A}$, we set
\bez
\d F = F_x \, \xi_1 + \frac{1}{2} (J_m F - F J_n) \, \xi_2 \, , \qquad
\bd F = \frac{1}{2} (J_m F - F J_n) \, \xi_1 + F_t \, \xi_2 
\eez
(also see \cite{DMH10AKNS,DKMH11acta,CDMH16,MH23,Ye+Zhang23}). We write $J = J_2$.
Then the Miura transformation equation (\ref{Miura}) becomes the 
``Miura system"\footnote{A similar system has been explored in \cite{GKM24}.}
\be
  \phi_x = \frac{1}{2} \, [J,g] \, g^{-1} \, , \qquad g_t g^{-1} = \frac{1}{2} \, [J,\phi] 
  \, .   \label{Miura_sys}
\ee
Introducing 
\bez
     \tilde{g} := \frac{1}{\sqrt{\det(g)}} \, g \, ,
\eez
the latter equations become
\bez
  \phi_x = \frac{1}{2} \, [J,\tilde{g}] \, \tilde{g}^{-1} \, , \qquad
  \tilde{g}_t \tilde{g}^{-1} + \frac{(\sqrt{\det(g)})_t}{\sqrt{\det(g)}} \, I_2 = \frac{1}{2} \, [J,\phi] \, .
\eez
From the second equation, by taking the trace, we obtain
\bez
        \det(g)_t = 0 \, .
\eez
Let now $J = \mathrm{diag}(1,-1)$. Then the Miura system (\ref{Miura_sys}) results in 
\be
  \left( \begin{array}{cc} \tilde{g}_{11} & \tilde{g}_{12} \\
    	\tilde{g}_{21} & \tilde{g}_{22} \end{array} \right)_t 
  = \left( \begin{array}{cc} \phi_{12} \, \tilde{g}_{21} & \phi_{12} \,  \tilde{g}_{22} \\
  	- \phi_{21} \, \tilde{g}_{11} & - \phi_{21} \, \tilde{g}_{12} \end{array} \right)\, , \quad
  \left( \begin{array}{cc} \phi_{11} & \phi_{12} \\
  	\phi_{21} & \phi_{22} \end{array} \right)_x 
 = \left( \begin{array}{cc} - \tilde{g}_{12} \, \tilde{g}_{21} & \tilde{g}_{12} \, \tilde{g}_{11} \\
 	- \tilde{g}_{21} \tilde{g}_{22} & \tilde{g}_{12} \, \tilde{g}_{21} \end{array} \right)\, .	\label{Miura_sys2} 	
\ee
These are, up to renamings, the two equations (2.15) in \cite{LLZ24}. 

\begin{proposition}[\cite{LLZ24}]
\label{prop:pKN-1} 
Let $\phi$ and $\tilde{g}$, with $\tilde{g}_{22} \neq 0$, satisfy the Miura system. Then
\be
    u := \phi_{21} \, , \qquad 
    v := \imag \tilde{g}_{12}/\tilde{g}_{22} 
       = \imag g_{12}/g_{22}   \label{u,v} \, , 
\ee	
solve the coupled Fokas-Lenells equations (\ref{pKN-1}). 
\end{proposition}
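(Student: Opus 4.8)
The plan is to reduce the matrix Miura system (\ref{Miura_sys2}) to its scalar component equations and to verify the two equations in (\ref{pKN-1}) by direct substitution. First I would read off, from the right-hand sides of (\ref{Miura_sys2}), the relevant scalar relations: the $t$-derivatives $(\tilde g_{21})_t = -\phi_{21}\tilde g_{11}$, $(\tilde g_{22})_t = -\phi_{21}\tilde g_{12}$ and $(\tilde g_{12})_t = \phi_{12}\tilde g_{22}$, together with the $x$-derivatives $(\phi_{12})_x = \tilde g_{11}\tilde g_{12}$ and $(\phi_{21})_x = -\tilde g_{21}\tilde g_{22}$ (the component equations for $\phi_{11}$, $\phi_{22}$, $\tilde g_{11}$ are not needed). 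The one extra ingredient is the determinant constraint: since $\tilde g = g/\sqrt{\det g}$ one has $\det\tilde g = 1$, i.e. $\tilde g_{11}\tilde g_{22} - \tilde g_{12}\tilde g_{21} = 1$, which I would use repeatedly to eliminate $\tilde g_{11}$.

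For the first equation I would start from $u_x = (\phi_{21})_x = -\tilde g_{21}\tilde g_{22}$ and differentiate in $t$, inserting $(\tilde g_{21})_t$ and $(\tilde g_{22})_t$ from the list above; this yields $u_{xt} = u\,(\tilde g_{11}\tilde g_{22} + \tilde g_{12}\tilde g_{21})$ with $u = \phi_{21}$. Replacing $\tilde g_{11}\tilde g_{22}$ by $1 + \tilde g_{12}\tilde g_{21}$ via the determinant constraint gives $u_{xt} - u = 2u\,\tilde g_{12}\tilde g_{21}$, which I would then match against $2\imag u v u_x = 2\imag u\,(\imag \tilde g_{12}/\tilde g_{22})\,(-\tilde g_{21}\tilde g_{22})$; here the factors $\tilde g_{22}$ cancel and $\imag^2 = -1$ combine to give exactly $2u\,\tilde g_{12}\tilde g_{21}$, establishing the first of (\ref{pKN-1}).

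The second equation requires more care, and this is where I expect the main obstacle to lie: the Miura system supplies only $t$-derivatives of $\tilde g$ and $x$-derivatives of $\phi$, so a naive attempt to compute $v_x$ directly would require $\tilde g_x$, which is not available. The remedy is to differentiate in the opposite order, computing $v_t$ first. Using $(\tilde g_{12})_t$, $(\tilde g_{22})_t$ and the identity $(\tilde g_{12}/\tilde g_{22})^2 = -v^2$, the quotient simplifies neatly to $v_t = \imag \phi_{12} - \imag u v^2$. Now differentiating this in $x$ is legitimate, since $(\phi_{12})_x$ is given by the Miura system; the term $-\imag (u v^2)_x$ produces $-\imag u_x v^2 - 2\imag u v v_x$, and the unwanted $v_x$ contribution is precisely cancelled by the $+2\imag u v v_x$ term already present in the equation to be verified. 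What remains, after again using the determinant constraint to rewrite $\imag \tilde g_{11}\tilde g_{12}$ as $v\,(1 + \tilde g_{12}\tilde g_{21})$, is $v\,\tilde g_{12}\tilde g_{21} - \imag u_x v^2$; substituting $u_x = -\tilde g_{21}\tilde g_{22}$ and $v = \imag \tilde g_{12}/\tilde g_{22}$ shows these two terms are equal and opposite, so the expression vanishes and the second equation in (\ref{pKN-1}) follows. Throughout, the only nontrivial inputs are the component form of (\ref{Miura_sys2}), the constant-determinant relation $\det\tilde g = 1$, and the decision to compute $v_t$ before taking the $x$-derivative.
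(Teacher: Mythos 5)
Your proposal is correct and is precisely the direct verification the paper invokes: its proof consists of the single remark that the claim ``is easily verified directly, using (\ref{Miura_sys2}) and $\det(\tilde{g}) = 1$'', and your computation (extracting the component equations, using the unit-determinant constraint to eliminate $\tilde g_{11}$, and computing $v_t$ before taking the $x$-derivative) is exactly that verification carried out in full, with all steps checking out.
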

\begin{proof}
This is easily verified directly, using (\ref{Miura_sys2}) and 
$\det(\tilde{g}) = 1$.
\end{proof}

\begin{remark}
Whereas the authors of \cite{LLZ24} started by reducing the Miura transformation, relating the two familiar potential 
forms of the self-dual Yang-Mills (sdYM) equation, to a two-dimensional system, more directly a corresponding reduction of a bidifferential calculus 
for the sdYM equation has been used in \cite{Ye+Zhang23}. 
The vectorial Darboux transformation for the coupled Fokas-Lenells equations, obtained in \cite{Ye+Zhang23}, will be replaced in the present work by a much better version, using results from \cite{LLZ24}. The crucial new insight in \cite{LLZ24} is in fact what is formulated in the preceding proposition. 
\hfill $\Box$
\end{remark}

\begin{lemma}
	\label{lem:Miura_uv} \hspace{1cm}  \\
(1) If $u$, $u_x$ and $v$ are non-zero, the Miura system (\ref{Miura_sys2}) implies
\be
  && \tilde{g}_{22} = \exp\Big( \imag \int u v \, dt \Big) \, , \label{Miura_1} \\
  && \tilde{g}_{11} = \frac{1+ \imag u_x v}{\tilde{g}_{22}} \, , \qquad
     \tilde{g}_{12} = - \imag v \, \tilde{g}_{22} \, , \qquad
     \tilde{g}_{21} = - \frac{u_x}{\tilde{g}_{22}} \, , \label{Miura_2} \\
  && \phi_{12} = - \imag v_t + u v^2 \, , \qquad \phi_{21} = u \, , \label{Miura_3} \\
  && \phi_{11x} = - \phi_{22x} = - \imag u_x v \, . \label{Miura_4}
\ee
(2) If $(u,v)$ solves the coupled Fokas-Lenells equations (\ref{pKN-1}), then (\ref{Miura_1})-(\ref{Miura_4}) 
determines a solution $(\phi,\tilde{g})$ of the Miura system (\ref{Miura_sys2}).
\end{lemma}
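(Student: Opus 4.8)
The plan is to treat both directions as a direct componentwise analysis of the matrix system (\ref{Miura_sys2}), using the definitions $u = \phi_{21}$ and $v = \imag\,\tilde g_{12}/\tilde g_{22}$ from (\ref{u,v}) together with the normalization $\det(\tilde g) = 1$. For part~(1) I would first record the two entries that are immediate: $\phi_{21} = u$ is the definition of $u$, and the definition of $v$ rearranges at once to $\tilde g_{12} = -\imag\,v\,\tilde g_{22}$, which is the middle relation in (\ref{Miura_2}).

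Next I would extract $\tilde g_{22}$. The $(2,2)$ entry of the $\tilde g_t$-equation in (\ref{Miura_sys2}) reads $\tilde g_{22,t} = -\phi_{21}\tilde g_{12} = \imag\,u v\,\tilde g_{22}$ after inserting the two relations just found; this is a linear first-order ODE in $t$ whose integration yields (\ref{Miura_1}). The $(2,1)$ entry of the $\phi_x$-equation gives $u_x = \phi_{21,x} = -\tilde g_{21}\tilde g_{22}$, hence the formula for $\tilde g_{21}$ in (\ref{Miura_2}); and $\tilde g_{11}$ then follows from $\det(\tilde g)=1$ after noting $\tilde g_{12}\tilde g_{21} = \imag\,u_x v$. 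The expression for $\phi_{12}$ in (\ref{Miura_3}) comes from the $(1,2)$ entry $\tilde g_{12,t} = \phi_{12}\tilde g_{22}$, differentiating $\tilde g_{12} = -\imag v\tilde g_{22}$ and eliminating $\tilde g_{22,t}$ by means of (\ref{Miura_1}). Finally, the diagonal entries of the $\phi_x$-equation are exactly $\phi_{11,x} = -\tilde g_{12}\tilde g_{21} = -\imag u_x v$ and $\phi_{22,x} = \tilde g_{12}\tilde g_{21}$, giving (\ref{Miura_4}). The only genuine operation here is the single integration producing (\ref{Miura_1}); everything else is algebraic rearrangement, and the hypotheses $u,u_x,v \neq 0$ simply keep us in the nondegenerate regime where this parametrization is faithful.

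For part~(2) I would run the computation in reverse: define $\tilde g$ and $\phi$ through (\ref{Miura_1})--(\ref{Miura_4}) (with $\phi_{11},\phi_{22}$ fixed up to arbitrary functions of $t$) and substitute into all eight scalar equations contained in (\ref{Miura_sys2}). I expect a clean trichotomy. The normalization $\det(\tilde g)=1$ holds automatically, and the $(2,2)$ and $(1,2)$ entries of the $\tilde g_t$-equation, together with the $(2,1)$ and both diagonal entries of the $\phi_x$-equation, hold identically by construction. The remaining three are where the dynamics enters: the $(2,1)$ and $(1,1)$ entries of the $\tilde g_t$-equation should collapse, after using (\ref{Miura_1}), to a nonzero multiple of $u_{xt} - u - 2\imag\,u v\,u_x$, i.e. the first equation of (\ref{pKN-1}), while the $(1,2)$ entry of the $\phi_x$-equation should collapse to a multiple of $v_{xt} - v + 2\imag\,u v\,v_x$, the second equation of (\ref{pKN-1}). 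Hence $(\phi,\tilde g)$ solves (\ref{Miura_sys2}) precisely when $(u,v)$ solves the coupled Fokas-Lenells equations.

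The computation is routine rather than deep, so the ``hard part'' is only bookkeeping: one must differentiate the quotient expressions in (\ref{Miura_1})--(\ref{Miura_2}) carefully, consistently eliminate $\tilde g_{22,t}$ through the ODE (\ref{Miura_1}), and track the factors of $\imag$ so that each of the three nontrivial equations visibly reproduces one of the Fokas-Lenells equations. The mildest subtlety is that (\ref{Miura_4}) fixes $\phi_{11},\phi_{22}$ only up to $x$-independent terms; since these never appear on the right-hand sides of (\ref{Miura_sys2}), they constitute a harmless gauge freedom and affect neither implication.
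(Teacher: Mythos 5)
Your proof is correct, and it follows the same basic strategy as the paper --- a componentwise analysis of (\ref{Miura_sys2}) under the identifications (\ref{u,v}) --- but part (1) is organized around a genuinely different key fact. The paper first invokes Proposition~\ref{prop:pKN-1} to conclude that $(u,v)$ satisfies (\ref{pKN-1}), and then solves the dynamical relation $u \, \tilde{g}_{11} = - \tilde{g}_{21t}$ for $\tilde{g}_{11}$, which requires $u \neq 0$ and the first equation of (\ref{pKN-1}) to simplify $\tilde{g}_{21t}$. You instead obtain $\tilde{g}_{11}$ purely algebraically from the normalization $\det(\tilde{g})=1$ together with $\tilde{g}_{12} \, \tilde{g}_{21} = \imag u_x v$, so your part (1) never uses the coupled Fokas-Lenells equations or Proposition~\ref{prop:pKN-1} at all, and does not even need $u \neq 0$ at that step; this is legitimate, since $\det(\tilde{g})=1$ holds by the very definition $\tilde{g} = g/\sqrt{\det(g)}$, and the paper itself relies on this normalization in the proof of Proposition~\ref{prop:pKN-1}. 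What the paper's detour buys is that it simultaneously shows the two equations you never touch in part (1), namely $\tilde{g}_{11t} = \phi_{12} \, \tilde{g}_{21}$ and $\phi_{12x} = \tilde{g}_{12} \, \tilde{g}_{11}$, to be consequences of (\ref{pKN-1}) --- which is precisely the substance of part (2), dismissed by the paper as ``easily verified''. Your part (2) supplies exactly this missing bookkeeping, and your sorting is accurate: five of the eight scalar equations (plus $\det(\tilde g)=1$) hold identically by construction, while the $(2,1)$ and $(1,1)$ entries of the $\tilde g_t$-equation and the $(1,2)$ entry of the $\phi_x$-equation reduce to nonzero multiples of the first, first, and second equation of (\ref{pKN-1}), respectively. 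Your closing remark is also correct: $\phi_{11}$ and $\phi_{22}$ enter (\ref{Miura_sys2}) only through their $x$-derivatives, so the integration ``constants'' (functions of $t$, and likewise the $x$-dependent freedom in the antiderivative in (\ref{Miura_1})) are harmless gauge freedom.
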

\begin{proof}
Eliminating $\phi_{21}$ and $\tilde{g}_{12}$ via (\ref{u,v}), (\ref{Miura_sys2}) is equivalent to 
\bez
&& \tilde{g}_{22t} = \imag u \, v \, \tilde{g}_{22} \, , \qquad
\tilde{g}_{21} \, \tilde{g}_{22} = - u_x  \, , \qquad 
u \, \tilde{g}_{11} = - \tilde{g}_{21t} \, , \qquad 
 \phi_{12} = - \imag v_t + u \, v^2 \, , \nonumber \\
&& \phi_{11x} = - \phi_{22x} = -  \tilde{g}_{12} \, \tilde{g}_{21} 
= - \imag u_x v \, , \qquad
\phi_{12x} = \tilde{g}_{12} \, \tilde{g}_{11} \, , \qquad
\phi_{12} \, \tilde{g}_{21} = \tilde{g}_{11t} \, .  
\eez	
The general solution of the first equation is given by (\ref{Miura_1}) and everywhere non-zero on $\mathbb{R}^2$. \\
(1) Let  (\ref{Miura_sys2}) hold. According to Proposition~\ref{prop:pKN-1}, $u$ and $v$ then satisfy (\ref{pKN-1}). The last system determines, in turn, $\tilde{g}_{12}$, $\tilde{g}_{21}$, $\tilde{g}_{11}$ (assuming 
$u \neq 0$ and using (\ref{pKN-1}))  in terms of $u$ and $v$. The obtained expressions are in (\ref{Miura_2}). 
The last two equations of the above system are then satisfied as a consequence of (\ref{pKN-1}), 
and the remaining equations are given by (\ref{Miura_4}).  \\
(2)  This is easily verified.
\end{proof}	
	
Next we present a vectorial binary Darboux transformation for the coupled Fokas-Lenells equations (\ref{pKN-1}).

\begin{theorem}
\label{thm:bDT_KN-1}
Let $\Delta$ and $\Gamma$ be invertible constant $n \times n$ matrices. Let $(u,v)$ be a solution of the coupled Fokas-Lenells equations (\ref{pKN-1}). 
Furthermore, let $\theta_1,\theta_2$ and $\eta_1,\eta_2$ be $n$-component row, respectively column vector solutions of the linear systems
\be
&&   \theta_{1x} \, \Delta - \theta_1 \, (\frac{1}{2} + \imag u_x v) 
     - \imag \theta_2 \, v \, ( 1 + \imag u_x v) = 0 \, , \qquad
   \theta_{2x} \, \Delta + \theta_2 \, (\frac{1}{2} + \imag u_x v) + \theta_1 \, u_x = 0 \, , \nonumber \\
&&  \theta_{1t} - \frac{1}{2} \theta_1 \Delta + \theta_2 \, (\imag v_t - u v^2) = 0 \, , \qquad
    \theta_{2t} + \frac{1}{2} \theta_2 \Delta + \theta_1 \, u = 0 \, , \nonumber \\
&&  \Gamma \, \eta_{1x} + (\frac{1}{2} + \imag u_x v) \, \eta_1 - u_x \, \eta_2 = 0 \, , \qquad
	\Gamma \, \eta_{2x} - (\frac{1}{2} + \imag u_x v) \, \eta_2 
	+ \imag v \, (1 + \imag u_x v) \, \eta_1 = 0 \, ,   \nonumber \\
&& \eta_{1t} + \frac{1}{2} \Gamma \, \eta_1 - u \, \eta_2 = 0 \, , \qquad
   \eta_{2t} - \frac{1}{2} \Gamma \, \eta_2 - (\imag v_t - u v^2) \, \eta_1 = 0 \, .
     \label{FL_linsys}
\ee
Furthermore, let $\Omega$ be an $n \times n$ matrix solution of the Sylvester equation 
\be
    \Gamma \, \Omega - \Omega \, \Delta = \eta_1 \, \theta_1 + \eta_2 \, \theta_2 
    \, ,  \label{Sylv}
\ee
and the linear equations
\be
 \Omega_x \Delta = - \eta_{1x} \theta_1 - \eta_{2x} \theta_2 \, , \qquad   
 \Omega_t = - \frac{1}{2} ( \eta_1 \, \theta_1 - \eta_2 \, \theta_2 ) \, .   \label{FL_Om}
\ee
Then, in any open set of $\mathbb{R}^2$ where $\Omega$ is invertible,
\be
   u' := u - \theta_2 \Omega^{-1} \eta_1 \, , 
   \qquad 
   v' := \frac{ v \, (1 - \theta_1 \Omega^{-1} \Gamma^{-1} \eta_1 )  
   	 -\imag \theta_1 \Omega^{-1} \Gamma^{-1} \eta_2 }{ 1 - \theta_2 \Omega^{-1} \Gamma^{-1} \eta_2 + \imag v \, \theta_2 \Omega^{-1} \Gamma^{-1} \eta_1 } \, ,
  \label{u',v'}
\ee
solve the coupled Fokas-Lenells equations (\ref{pKN-1}).
\end{theorem}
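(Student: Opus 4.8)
The plan is to deduce this theorem from Theorem~\ref{thm:Miura} by specializing the abstract binary Darboux transformation to the bidifferential calculus of Section~\ref{sec:FLsys}. First I would convert the given solution $(u,v)$ of (\ref{pKN-1}) into a solution of the Miura transformation equation (\ref{Miura}): by Lemma~\ref{lem:Miura_uv}(2), the formulas (\ref{Miura_1})--(\ref{Miura_4}) produce a pair $(\phi,\tilde g)$ with $\det\tilde g = 1$ solving the Miura system (\ref{Miura_sys2}); setting $g := \tilde g$, the condition $\det g = 1$ makes $(\phi,g)$ a solution of (\ref{Miura}) itself. Consequently $A := \d\phi = (\bd g)\,g^{-1}$ satisfies (\ref{A_eqs}), so the hypotheses of Theorem~\ref{thm:A} on $\phi,g$ are in force and it remains to match the concrete data $\Delta,\Gamma,\theta_i,\eta_i,\Omega$ to the abstract objects.

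The key bookkeeping step is to exhibit the $1$-forms $\lambda,\kappa$ that make the structure equations (\ref{Delta,lambda,Gamma,kappa_eqs}) hold for the given constant matrices $\Delta,\Gamma$. I would take
\bez
   \lambda = -\tfrac{1}{2}\, J_n \, \xi_1 + \tfrac{1}{2}\, J_n \Delta \, \xi_2 \, , \qquad
   \kappa = \tfrac{1}{2}\, J_n \, \xi_1 - \tfrac{1}{2}\, \Gamma J_n \, \xi_2 \, ,
\eez
where $J_n$ is any admissible matrix entering the definition of $\d,\bd$ on $n \times n$ matrices. A short computation, using only $J_n^2 = I_n$ and that $\Delta,\Gamma$ are constant, then verifies all four equations in (\ref{Delta,lambda,Gamma,kappa_eqs}); in those for $\lambda$ and $\kappa$ the two $\xi_1\xi_2$-contributions cancel against each other.

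Next I would expand the abstract linear systems (\ref{linsys1}), with $A = \d\phi$, and the equations (\ref{preSylv})--(\ref{linsys2}), into their $\xi_1$- and $\xi_2$-components. Writing $\theta = \begin{pmatrix}\theta_1 \\ \theta_2\end{pmatrix}$ (a $2\times n$ matrix) and $\eta = (\eta_1,\eta_2)$ (an $n\times 2$ matrix), the relevant blocks of $\phi_x$ and of $\tfrac12[J,\phi]$ are supplied by Lemma~\ref{lem:Miura_uv}: in particular $\phi_{21}=u$, $\phi_{12}= -\imag v_t+uv^2$, and the on-shell identity $\phi_{12x} = -\imag v\,(1+\imag u_x v)$, which holds precisely because $(u,v)$ solves (\ref{pKN-1}). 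With the above $\lambda,\kappa$ every explicit $J_n$-term cancels, and the eight resulting scalar equations reproduce exactly (\ref{FL_linsys}); likewise (\ref{preSylv}) becomes (\ref{Sylv}), while the two components of (\ref{linsys2})—after the same $J_n$-cancellation and one substitution of (\ref{Sylv}) for $\eta\theta$—collapse to the two equations (\ref{FL_Om}).

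Finally, Theorem~\ref{thm:Miura} yields that $\phi' = \phi - \theta\Omega^{-1}\eta$ and $g' = (I - \theta\Omega^{-1}\Gamma^{-1}\eta)\,g$ again solve (\ref{Miura}), hence the Miura system; passing to $\tilde g' = g'/\sqrt{\det g'}$ (the ratio $v' = \imag g'_{12}/g'_{22}$ is scale-invariant) and invoking Proposition~\ref{prop:pKN-1} shows that $u' := \phi'_{21}$ and $v' := \imag g'_{12}/g'_{22}$ solve (\ref{pKN-1}). Reading off the $(2,1)$-block of $\phi'$ gives $u' = u - \theta_2\Omega^{-1}\eta_1$, and computing the $(1,2)$- and $(2,2)$-entries of $g' = (I-\theta\Omega^{-1}\Gamma^{-1}\eta)g$ and forming their ratio reproduces the stated fraction for $v'$. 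I expect the main obstacle to be this componentwise reduction together with the final extraction: one must confirm that every $J_n$-dependent contribution cancels and that the surviving equations coincide with (\ref{FL_linsys})--(\ref{FL_Om}) only after the coupled Fokas-Lenells equations are used. The delicate part is thus the abstract-to-concrete dictionary rather than any conceptual difficulty, since the solution-generating mechanism is already guaranteed by Theorem~\ref{thm:Miura}.
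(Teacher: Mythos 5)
Your proposal is correct and follows essentially the same route as the paper's own proof: the same choice of $\lambda$ and $\kappa$ (so that (\ref{Delta,lambda,Gamma,kappa_eqs}) reduces to constancy of $\Delta,\Gamma$), the same componentwise expansion of (\ref{linsys1})--(\ref{linsys2}) using the Miura data $(\phi,\tilde g)$ from Lemma~\ref{lem:Miura_uv} (including the identity $\phi_{12x}=-\imag v(1+\imag u_x v)$ and the use of (\ref{Sylv}) in the $\Omega_t$ equation), followed by Theorem~\ref{thm:Miura} and Proposition~\ref{prop:pKN-1} to extract (\ref{u',v'}). The only cosmetic difference is that the paper also carries along the $\d$-constant $C$ in $\phi'$ and observes it is diagonal, hence does not affect $u'=\phi'_{21}$, whereas you implicitly set $C=0$, which is equally valid.
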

\begin{proof}
Writing
\bez
A = A_1 \xi_1 + A_2 \xi_2 \, , \qquad
\kappa = \kappa_1 \xi_1 + \kappa_2 \xi_2 \, , \qquad
\lambda = \lambda_1 \xi_1 + \lambda_2 \xi_2 \, ,
\eez 
(\ref{linsys1}) reads
\bez
&&   \frac{1}{2} (J_2 \theta - \theta J_n) = A_1 \theta + \theta_x \Delta + \theta \lambda_1 \, , \qquad
\theta_t = A_2 \theta + \frac{1}{2} (J_2 \theta - \theta J_n) \Delta + \theta \lambda_2 \, , \\
&&   \frac{1}{2} (J_n \eta - \eta J_2) = - \eta A_1 + \Gamma \eta_x + \kappa_1 \eta  \, , \qquad
\eta_t = - \eta A_2 + \frac{1}{2} \Gamma (J_n \eta - \eta J_2) + \kappa_2 \eta  \, . 
\eez
Choosing 
\bez
\kappa_1 = \frac{1}{2} J_n \, , \qquad
\kappa_2 = - \frac{1}{2} \Gamma J_n \, , \qquad
\lambda_1 = - \frac{1}{2} J_n \, , \qquad
\lambda_2 = \frac{1}{2} J_n \Delta \, , 
\eez
and using
\bez
      A_1 = \phi_x \, , \qquad
      A_2 = \frac{1}{2} [J , \phi] = \left( \begin{array}{cc} 0 & \phi_{12} \\ - \phi_{21} & 0 \end{array} \right) \, ,
\eez
the latter system simplifies to 
\bez
&&   \frac{1}{2} J \, \theta = \phi_x \, \theta + \theta_x \Delta  \, , \qquad 
\theta_t = \frac{1}{2} [J , \phi] \, \theta + \frac{1}{2} J \, \theta \Delta  \, ,  \\
&&   \frac{1}{2} \eta \, J =  \eta \, \phi_x - \Gamma \eta_x \, , \qquad
\eta_t = - \frac{1}{2} \eta \, [J , \phi] - \frac{1}{2} \Gamma \eta \, J \, .
\eez
Writing 
\bez
    \theta = \left( \begin{array}{c} \theta_1 \\ \theta_2 \end{array} \right) \, , 
    \qquad
    \eta = \left( \begin{array}{cc} \eta_1 & \eta_2 \end{array} \right) \, ,
\eez
with $n$-component row vectors $\theta_i$ and $n$-component column vectors $\eta_i$,
$i=1,2$, and using the solution of the Miura system corresponding to $(u,v)$  
according to Lemma~\ref{lem:Miura_uv}, this becomes (\ref{FL_linsys}). 
We note that $\phi_{12x} = -\imag v \, ( 1 + \imag u_x v )$.

The conditions in (\ref{Delta,lambda,Gamma,kappa_eqs}) boil down to
\bez
   \Delta_x = \Delta_t = 0 \, , \qquad \Gamma_x = \Gamma_t = 0 \, ,
\eez
so that the $n \times n$ matrices $\Delta$ and $\Gamma$ have to be constant.
(\ref{linsys2}) takes the form 
\bez
   \Omega_x \Delta = - \eta_x \theta \, , \qquad   
   \Omega_t = - \frac{1}{2} \eta \, J \, \theta \, ,
\eez
which is (\ref{FL_Om}). To derive the second equation, we used the Sylvester equation. The statement of the theorem now follows from 
Theorem~\ref{thm:Miura}, which states that
\bez
 \phi' = \phi - \theta \, \Omega^{-1} \eta + C \, , \qquad
 g' = (I - \theta \,  \Omega^{-1} \Gamma^{-1} \eta) \, g \, ,
\eez
where $C$ is any $\d$-constant $2 \times 2$ matrix, solve the Miura equation, 
and Proposition~\ref{prop:pKN-1}, which says that
\bez
   u' := \phi'_{21} \, , \qquad 
   v' := \imag \frac{g'_{12}}{g'_{22}} 
       = \imag \frac{ (1 - \theta_1 \Omega^{-1} \Gamma^{-1} \eta_1 ) \, \tilde{g}_{12}  - \theta_1 \Omega^{-1} \Gamma^{-1} \eta_2 \, \tilde{g}_{22}}{(1 - \theta_2 \Omega^{-1} \Gamma^{-1} \eta_2) \, \tilde{g}_{22} - \theta_2 \Omega^{-1} \Gamma^{-1} \eta_1 \, \tilde{g}_{12} } \, , 
\eez
solve the system (\ref{pKN-1}). Noting that $C$ is $\d$-constant if and only 
if $C = \mathrm{diag}(c_1,c_2)$ with $c_{ix}=0$, $i=1,2$, and using (\ref{u,v}), 
we arrive at (\ref{u',v'}). 
\end{proof}

\begin{remark}
\label{rem:Delta,Gamma_Jnf}
(\ref{FL_linsys}) is invariant under
\bez
    &&  \theta_i \mapsto \theta_i S \, , \qquad \eta_i \mapsto T \eta_i \qquad i=1,2 \, , \\
    && \Delta \mapsto S^{-1} \Delta S \, , \qquad \Gamma \mapsto T \Gamma T^{-1} \, ,
\eez
with any constant invertible $n \times n$ matrices $S$ and $T$. As a consequence, without restriction of generality, 
we may assume that $\Delta$ and $\Gamma$ are in Jordan normal form. \hfill $\Box$
\end{remark}

\begin{remark}
For trivial seed, i.e., $u=0$ and $v = 0$, (\ref{u',v'}) reduces to (3.24) in \cite{LLZ24}.  \hfill $\Box$
\end{remark}

\begin{example}
The Fokas-Lenells equation (\ref{FLeq}) admits the following plane wave solution,
\be
   u = A \, e^{\imag [ \alpha \, x + (2|A|^2 - \alpha^{-1}) \, t ]} \, ,  
        \label{FL_pw}
\ee
with a complex constant $A$ and a real constant $\alpha \neq 0$. 
Setting $v = u^\ast$, we have a solution of the coupled Fokas-Lenells equations (\ref{pKN-1}). 
To simplify matters, in the following we impose the constraint
\bez
      \alpha = |A|^{-2} \, ,
\eez
so that 
\bez
u = A \, e^{\imag \varphi} \, , \qquad
\varphi := |A|^{-2} \, x + |A|^2 \, t \, .
\eez
Then the linear system (\ref{FL_linsys}) simplifies to
\bez
&&   \theta_{1x} \, \Delta + \frac{1}{2} \theta_1 = 0 \, , \qquad
\theta_{1t} - \frac{1}{2} \theta_1 \Delta = 0 \, , \nonumber \\
&& \theta_{2x} \, \Delta - \frac{1}{2} \theta_2 = - \imag |A|^{-2} u \, \theta_1  \, , 
\qquad
\theta_{2t} + \frac{1}{2} \theta_2 \Delta = - u \, \theta_1 \, , \nonumber \\
&&  \Gamma \, \eta_{2x} + \frac{1}{2} \, \eta_2 = 0 \, , \qquad
\eta_{2t} - \frac{1}{2} \Gamma \, \eta_2 = 0 \, , \nonumber \\
&& \Gamma \, \eta_{1x} - \frac{1}{2} \, \eta_1 = \imag |A|^{-2} \, u \, \eta_2  \, , \qquad
\eta_{1t} + \frac{1}{2} \Gamma \, \eta_1 = u \, \eta_2  \, .
\eez
Hence
\bez
 && \theta_1 = a_1 \, e^{-\Phi(\Delta)} \, , \qquad
     \theta_2 = - A \, e^{\imag \varphi} \, a_1 \, (\Delta + \imag |A|^2 I)^{-1}   
    \, e^{-\Phi(\Delta)} + a_2 \, e^{\Phi(\Delta)} \, , \\
 && \eta_2 = e^{-\Phi(\Gamma)} \, b_2 \, ,  \qquad \;
    \eta_1 = A \, e^{\imag \varphi} \, (\Gamma + \imag |A|^2 I)^{-1} 
    e^{-\Phi(\Gamma)} \, b_2 + e^{\Phi(\Gamma)} \, b_1 \, ,  
\eez
where $I$ is the $n \times n$ identity matrix, $a_1,a_2$ are constant $n$-component row vectors, $b_1,b_2$ constant $n$-component column vectors, 
and
\bez
    \Phi(\Delta) := \frac{1}{2} (\Delta^{-1} x - \Delta t) \, .
\eez
Assuming that $\Delta$ and $\Gamma$ have no eigenvalue in common (``spectrum condition"), the Sylvester 
equation (\ref{Sylv}) has a unique solution $\Omega$. According to 
Theorem~\ref{thm:bDT_KN-1},
\be
  u' = A \, e^{\imag \varphi} - \theta_2 \Omega^{-1} \eta_1 \, , 
\qquad 
 v' = \frac{ A^\ast e^{-\imag \varphi} \, (1 - \theta_1 \Omega^{-1} \Gamma^{-1} \eta_1 )  
	-\imag \theta_1 \Omega^{-1} \Gamma^{-1} \eta_2 }{ 1 - \theta_2 \Omega^{-1} \Gamma^{-1} \eta_2 + \imag A^\ast e^{-\imag \varphi} \, \theta_2 \Omega^{-1} \Gamma^{-1} \eta_1 } 
	\, , \label{cFL_pw_sol}
\ee
yields an infinite set of solutions of the coupled Fokas-Lenells equations (\ref{pKN-1}). Although we started with a seed solution $(u,v)$, 
satisfying $v = u^\ast$, the generated solution $(u',v')$ typically does not
satisfy this condition. In fact, it is a difficult task to find conditions 
to be imposed on the parameters such that $v'=u'^\ast$ holds. We will solve this problem in Section~\ref{sec:red}. 

If $\Delta = \mathrm{diag}(\delta_1,\ldots,\delta_n)$ and 
$\Gamma = \mathrm{diag}(\gamma_1,\ldots,\gamma_n)$, $\gamma_i \neq \delta_j$, then
$\Omega$ is the Cauchy-like matrix with components 
\bez
     \Omega_{ij} = \frac{\eta_{1i} \, \theta_{1j} + \eta_{2i} \, \theta_{2j}}{\gamma_i - \delta_j} \qquad i,j=1,\ldots,n \, .
\eez
More generally, (\ref{cFL_pw_sol}) yields solutions of (\ref{pKN-1}) 
for any pair $(\Delta,\Gamma)$ of constant matrices, satisfying the spectrum condition (so that the Sylvester equation (\ref{Sylv}) has a unique 
solution). Without restriction, $\Delta$ and $\Gamma$ can both be taken in Jordan normal form. \hfill $\Box$
\end{example}

\subsection{Miura transformations from the coupled Fokas-Lenells equations 
	\label{subsec:Miura}
to (\ref{phi_eq}) and (\ref{g_eq})}
Using the bidifferential calculus in Section~\ref{sec:FLsys}, the integrable equation (\ref{phi_eq}) takes the form
\be
&&  \phi_{11tx} = (\phi_{12} \, \phi_{21})_x = - \phi_{22tx} \, ,  \label{phi_eq_1} \\
&&  \phi_{12xt} = \phi_{12} \, \big( 1 + (\phi_{22} - \phi_{11})_x \big) 
\, , \label{phi_eq_2} \\
&&  \phi_{21xt} = \phi_{21} \, \big( 1 + (\phi_{22} - \phi_{11})_x \big) 
\label{phi_eq_3} \, .
\ee
By integrating (\ref{phi_eq_1}) with respect to $t$, the remaining two 
equations can be written as\footnote{Here we should better replace 
the integral by an auxiliary function $w$ and add the equation 
$w_t = (\phi_{22} - \phi_{11})_x$. }  
\be
\phi_{12 xt} = \phi_{12} - 2 \phi_{12} \int (\phi_{12} \, \phi_{21})_x \, dt 
\, , \qquad
\phi_{21 xt} = \phi_{21} - 2 \phi_{21} \int (\phi_{12} \, \phi_{21})_x \, dt 
\, .   \label{AKNS-1}
\ee 

With the identification $u = \phi_{21}$ (cf. (\ref{u,v})), imposing the 
relations (\ref{Miura_3}) and (\ref{Miura_4}), (\ref{phi_eq_3}) becomes the first 
of the coupled Fokas-Lenells equations (\ref{pKN-1}). For $u \neq 0$, 
(\ref{phi_eq_1}) becomes the second of (\ref{pKN-1}). As a consequence of 
these, (\ref{phi_eq_2}) is then identically satisfied. Hence, 
(\ref{Miura_3}) and (\ref{Miura_4}) constitute a Miura transformation from 
the coupled Fokas-Lenells equations (\ref{pKN-1}) to the first member of the negative part of the AKNS hierarchy, which is (\ref{AKNS-1}) \cite{AKNS73}. 
Any solution $(u,v)$ of (\ref{pKN-1}) thus determines a solution of the latter.
\vspace{.2cm}

The integrable equation (\ref{g_eq}), evaluated with the bidifferential 
calculus in Section~\ref{sec:FLsys}, takes the form
\be
&&   (\tilde{g}_{22}\,\tilde{g}_{11t}-\tilde{g}_{21}\,\tilde{g}_{12t} )_x=0 \, ,  \label{g_eq_1} \\
&&  (\tilde{g}_{11}\,\tilde{g}_{22t}-\tilde{g}_{12}\, \tilde{g}_{21t})_x=0
\, , \label{g_eq_2} \\
&& \tilde{g}_{11}\,\tilde{g}_{12} - (\tilde{g}_{11}\,\tilde{g}_{12t}-\tilde{g}_{12}\,\tilde{g}_{11t})_x=0 \, ,
\label{g_eq_3} \\
&& \tilde{g}_{21}\,\tilde{g}_{22}- (\tilde{g}_{22}\,\tilde{g}_{21t}-\tilde{g}_{21}\,\tilde{g}_{22t})_x=0 \, . 
\label{g_eq_4} 
\ee
As a consequence of $\det(\tilde{g}) = 1$, (\ref{g_eq_1}) and (\ref{g_eq_2}) are equivalent. Together with
\bez
      \tilde{g}_{12} = - \imag v \, \tilde{g}_{22} \, ,
\eez
the relations (\ref{Miura_1}) and (\ref{Miura_2}) constitute a Miura transformation from the coupled Fokas-Lenells equations (\ref{pKN-1}) to the system (\ref{g_eq_1})-(\ref{g_eq_4}). Indeed, (\ref{g_eq_1}), (\ref{g_eq_3}) 
and (\ref{g_eq_4}) become, respectively,
\bez
  && \big( v \, ( u_{xt} - u - 2 \imag u u_x v ) \big)_x = 0  \, , \\
  && \imag ( v_{xt} - v + 2 \imag u v v_x ) 
     + \big( v^2 \, ( u_{xt} - u - 2 \imag u v u_x ) \big)_x = 0 \, , \\
  && ( u_{xt} - u - 2 \imag u u_x v )_x = 0 \, . 
\eez
If $(u,v)$ solves the coupled Fokas-Lenells equations (\ref{pKN-1}), 
then the latter equations are satisfied.

\begin{remark}
We recall that, setting
\bez
     \tilde{g} = \left( \begin{array}{rr} \cos(\Theta/2) & - \sin(\Theta/2) \\
     	\sin(\Theta/2) & \cos(\Theta/2) \end{array} \right) \, ,
\eez
with a real (or complex) function $\Theta$, 
the system (\ref{g_eq_1})-(\ref{g_eq_4}) reduces to the (complex) sine-Gordon equation $\Theta_{xt} = \sin(\Theta)$. 
The above reduction for $\tilde{g}$ is compatible with the restrictions 
$\phi_{22} = - \phi_{11}$ and $\phi_{21} = \phi_{12}$. The Miura system 
then reads
\bez
    \phi_{12} = - \frac{1}{2} \Theta_t \, , \qquad
    \phi_{12x} =  - \frac{1}{2} \sin(\Theta) \, , \qquad
    \phi_{11x} = \frac{1}{2} \big( 1-\cos(\Theta) \big) \, .
\eez
By a transformation of variables, (\ref{phi_eq_1})-(\ref{phi_eq_3}) are 
equivalent to the sharp line self-induced transparency (SIT) equations
(see \cite{DKMH11acta}, for example, and references cited there). 
\hfill $\Box$
\end{remark}

\section{Reduction to the Fokas-Lenells equation}
\label{sec:red}
As already mentioned in the introduction, the system (\ref{pKN-1}) reduces to the Fokas-Lenells equation (\ref{FLeq}) via 
\be
          v = u^\ast \, .    \label{red}
\ee
We have to implement this reduction in the vectorial binary Darboux transformation, expressed in Theorem~\ref{thm:bDT_KN-1}, in order 
to obtain a vectorial Darboux transformation for the Fokas-Lenells equation. Because of the strong asymmetry between the generated 
$u'$ and $v'$ in (\ref{u',v'}), this is rather difficult to achieve. 

In Section~\ref{subsec:FLred_trivial_seed}, we will therefore first deal with the case of trivial seed, i.e., $u=v=0$. This also makes contact with 
\cite{Ye+Zhang23,LLZ24}, which do not proceed beyond this case.
In Section~\ref{subsec:FLred_nontrivial_seed} we then turn to the general case.

\subsection{The case of vanishing seed}
\label{subsec:FLred_trivial_seed}
The following proposition is obtained as a special case of Theorem~\ref{thm:bDT_KN-1}.

\begin{proposition}  
\label{prop:FL_bDT_zero_seed}
Let $\Gamma$ be an invertible constant $n \times n$ matrix such that $\Gamma$ and $-\Gamma^\dagger$ (where $^\dagger$ denotes the conjugate transpose) have no 
eigenvalue in common (``spectrum condition"). Let 
\be
         \eta_1 = e^{- \frac{1}{2} (\Gamma^{-1} x + \Gamma t)} \, a_1 \, , \qquad
         \eta_2 = e^{\frac{1}{2} (\Gamma^{-1} x + \Gamma t)} \, a_2 \, ,  \label{FL0_eta}
\ee
with constant $n$-component column vectors $a_1,a_2$.
Furthermore, let $\Omega_1$ and $\Omega_2$ be the (unique) $n \times n$ matrix solutions of the Lyapunov equations
\be
        \Gamma \Omega_1 + \Omega_1 \Gamma^\dagger = \imag \eta_1 \eta_1^\dagger \Gamma^\dagger  \, , \qquad
   \Gamma \Omega_2 + \Omega_2 \Gamma^\dagger = \eta_2 \eta_2^\dagger \, .
               \label{FL0_Lyapunov}
\ee
Then, in any open set of $\mathbb{R}^2$ where $\Omega = \Omega_1 + \Omega_2$ is invertible, 
\be
       u' = - \eta_2^\dagger \, \Omega^{-1} \eta_1   \label{FL_sol}
\ee
solves the Fokas-Lenells equation (\ref{FLeq}).
\end{proposition}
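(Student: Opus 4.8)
The plan is to obtain Proposition~\ref{prop:FL_bDT_zero_seed} as the vanishing-seed specialization of Theorem~\ref{thm:bDT_KN-1}, augmented by a reduction that enforces the constraint $v'=u'^\ast$ of (\ref{red}). First I would put $u=v=0$ in the linear system (\ref{FL_linsys}). All nonlinear couplings drop out and the eight equations split into four one-line systems with constant coefficients; integrating them gives $\theta_1=a_1'\,e^{\frac12(\Delta^{-1}x+\Delta t)}$, $\theta_2=a_2'\,e^{-\frac12(\Delta^{-1}x+\Delta t)}$ and the $\eta_i$ of (\ref{FL0_eta}). To build in the reduction I would then impose $\Delta=-\Gamma^\dagger$ together with $\theta_2=\eta_2^\dagger$ and $\theta_1=\imag\,\eta_1^\dagger\Gamma^\dagger$, and verify directly --- using that $x,t$ are real and that functions of $\Gamma^\dagger$ commute --- that these expressions indeed solve the reduced (\ref{FL_linsys}); this amounts to choosing $a_2'=a_2^\dagger$ and $a_1'=\imag\,a_1^\dagger\Gamma^\dagger$.

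Next I would transport these data into the conditions on $\Omega$. With $\Delta=-\Gamma^\dagger$, $\theta_1=\imag\,\eta_1^\dagger\Gamma^\dagger$ and $\theta_2=\eta_2^\dagger$, the Sylvester equation (\ref{Sylv}) becomes $\Gamma\Omega+\Omega\Gamma^\dagger=\imag\,\eta_1\eta_1^\dagger\Gamma^\dagger+\eta_2\eta_2^\dagger$, so that the splitting $\Omega=\Omega_1+\Omega_2$ with the two Lyapunov equations (\ref{FL0_Lyapunov}) solves it; the spectrum condition on $\Gamma,-\Gamma^\dagger$ guarantees that each $\Omega_i$, hence $\Omega$, is uniquely determined. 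I would then confirm the evolution equations (\ref{FL_Om}): differentiating each Lyapunov equation in $x$ and in $t$, substituting $\eta_{ix},\eta_{it}$ from the reduced linear system, and invoking uniqueness once more, one reads off $\Omega_{1x},\Omega_{2x},\Omega_{1t},\Omega_{2t}$ in closed form whose sums reproduce exactly the right-hand sides of (\ref{FL_Om}). At this point all hypotheses of Theorem~\ref{thm:bDT_KN-1} are met, and $u'=-\theta_2\Omega^{-1}\eta_1=-\eta_2^\dagger\Omega^{-1}\eta_1$, which is (\ref{FL_sol}).

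The crux, and the main obstacle, is the final reduction step: showing that the asymmetric pair in (\ref{u',v'}) really satisfies $v'=u'^\ast$, so that $u'$ solves the Fokas-Lenells equation (\ref{FLeq}). The key is the Hermitian-type identity $\Omega-\Omega^\dagger=\imag\,\eta_1\eta_1^\dagger$, which I would establish by subtracting the conjugate transpose of (\ref{Sylv}) from (\ref{Sylv}): the difference $\Omega-\Omega^\dagger$ and $\imag\,\eta_1\eta_1^\dagger$ satisfy one and the same Lyapunov equation, hence coincide by uniqueness. Inserting $\theta_1=\imag\,\eta_1^\dagger\Gamma^\dagger$ and $\theta_2=\eta_2^\dagger$ into (\ref{u',v'}) turns the numerator of $v'$ into $\eta_1^\dagger\Gamma^\dagger\Omega^{-1}\Gamma^{-1}\eta_2$ and its denominator into $1-\eta_2^\dagger\Omega^{-1}\Gamma^{-1}\eta_2$, while $u'^\ast=-\eta_1^\dagger(\Omega^\dagger)^{-1}\eta_2$. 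Clearing denominators and stripping the common factor $\eta_1^\dagger$ reduces $v'=u'^\ast$ to the vector identity $\Omega^\dagger\Gamma^\dagger\Omega^{-1}\Gamma^{-1}\eta_2=-(1-\eta_2^\dagger\Omega^{-1}\Gamma^{-1}\eta_2)\,\eta_2$; rewriting $\Omega^\dagger\Gamma^\dagger=(\Gamma\Omega)^\dagger$ with the help of (\ref{Sylv}) and then using $\Omega^\dagger=\Omega-\imag\,\eta_1\eta_1^\dagger$ collapses it to the triviality $\Omega\,\Omega^{-1}\Gamma^{-1}\eta_2=\Gamma^{-1}\eta_2$. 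This yields $v'=u'^\ast$ and hence the claim; I expect the bookkeeping in this last identity, rather than any conceptual difficulty, to be where care is needed.
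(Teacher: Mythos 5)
Your proposal is correct and follows essentially the same route as the paper's proof: the same reduction ansatz $\Delta=-\Gamma^\dagger$, $\theta_1=\imag\,\eta_1^\dagger\Gamma^\dagger$, $\theta_2=\eta_2^\dagger$ inserted into Theorem~\ref{thm:bDT_KN-1}, the same splitting of the Sylvester equation (\ref{Sylv}) into the two Lyapunov equations (\ref{FL0_Lyapunov}), and the same mechanism for establishing $v'=u'^\ast$. Your key identity $\Omega-\Omega^\dagger=\imag\,\eta_1\eta_1^\dagger$ (obtained from Lyapunov uniqueness) is, modulo the Sylvester equation, exactly the paper's relation $\Gamma\Omega+\Omega^\dagger\Gamma^\dagger=\eta_2\eta_2^\dagger$, so the concluding algebra coincides up to reorganization.
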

\begin{proof}\footnote{The proof uses some clever observations from the proof of Theorem 2 in \cite{LLZ24}, where the counterpart $K$ of our matrix $\Gamma$ has been 
unnecessarily restricted to be diagonal.}
In Theorem~\ref{thm:bDT_KN-1} we set (cf. \cite{LLZ24})
\be
      \Delta = - \Gamma^\dagger \, , \qquad \theta_1 = \imag \eta_1^\dagger \Gamma^\dagger \, , \qquad
     \theta_2 = \eta_2^\dagger \, .   \label{red_cond_zero_seed}
\ee
The linear system (\ref{FL_linsys}) then reduces to 
\bez
       \Gamma \, \eta_{1x} + \frac{1}{2} \eta_1 = 0 \, , \qquad 
       \eta_{1t} + \frac{1}{2} \Gamma \, \eta_1 = 0 \, , \qquad
      \Gamma \, \eta_{2x} - \frac{1}{2} \eta_2 = 0 \, , \qquad
     \eta_{2t} - \frac{1}{2} \Gamma \, \eta_2 = 0 \, .
\eez
The general solution is given by (\ref{FL0_eta}). 
Using the spectrum condition, (\ref{Sylv}) splits into the two equations (\ref{FL0_Lyapunov}), where $\Omega = \Omega_1 + \Omega_2$. The spectrum 
condition guarantees uniqueness of the solution of the Lyapunov equation, 
hence $\Omega_2^\dagger = \Omega_2$, and, writing
\be
       \Omega_1 = \imag \tilde{\Omega}_1 \Gamma^\dagger \, ,
        \label{tOm1}
\ee
we have $\Gamma \tilde{\Omega}_1 + \tilde{\Omega}_1 \Gamma^\dagger = \eta_1 \eta_1^\dagger$ and thus $\tilde{\Omega}_1^\dagger = \tilde{\Omega}_1$.
It follows that
\bez
    (\Gamma \Omega_1)^\dagger = - \Gamma \Omega_1 \, .  
\eez
We obtain
\bez
   \eta_2 \, \eta_2^\dagger = \Gamma \, \Omega_2 + \Omega_2 \, \Gamma^\dagger 
   = \Gamma \, \Omega + \Omega^\dagger \, \Gamma^\dagger 
     - (\Gamma \Omega_1 + \Omega_1^\dagger \Gamma^\dagger) 
   = \Gamma \, \Omega + \Omega^\dagger \, \Gamma^\dagger \, .
\eez
For trivial seed,  (\ref{u',v'}) reduces to (\ref{FL_sol}) and
\bez
    v' = \frac{\eta_1^\dagger \Gamma^\dagger \Omega^{-1} \Gamma^{-1} \eta_2}{1
         - \eta_2^\dagger \Omega^{-1} \Gamma^{-1} \eta_2} \, .
\eez
Now 
\bez
   u'^\ast \, ( 1 - \eta_2^\dagger \Omega^{-1} \Gamma^{-1} \eta_2 ) 
   = - \eta_1^\dagger  (\Omega^\dagger)^{-1} \eta_2 +  \eta_1^\dagger  (\Omega^\dagger)^{-1} \, \eta_2 \, \eta_2^\dagger \, \Omega^{-1} \Gamma^{-1} \eta_2 
 = \eta_1^\dagger \Gamma^\dagger \Omega^{-1} \Gamma^{-1} \eta_2  
\eez
shows that $u'^\ast = v'$. 
Since the spectrum condition for $\Gamma$ is assumed to hold, 
the linear differential equations (\ref{FL_Om})
are a consequence of the Lyapunov equations. According to Theorem~\ref{thm:bDT_KN-1}, $u'$ solves (\ref{FLeq}). 
\end{proof}

\begin{remark}
Without restriction of generality, we may assume that $\Gamma$ has Jordan normal form, see Remark~\ref{rem:Delta,Gamma_Jnf}. Then $\Gamma$ has the block-diagonal 
structure
\be
      \Gamma = \mbox{block-diag}(\Gamma_1,\ldots,\Gamma_k) \, ,  \label{Gamma_bd}
\ee
with Jordan blocks $\Gamma_j$, $j=1,\ldots,k$, with eigenvalue $\gamma_j$. For a given choice of $\Gamma$, essentially one only has to solve the above Lyapunov equations. Assuming the spectral condition, there is always a unique solution.
\hfill $\Box$
\end{remark}

\begin{example}
Let $\Gamma$ be diagonal, i.e., $\Gamma = \mathrm{diag}(\gamma_1, \ldots, \gamma_n)$, and $\gamma_i^\ast \neq - \gamma_j$ 
for $i,j=1,\ldots,n$. The solutions of the above two Lyapunov equations are then given by the Cauchy-like matrices with entries
\bez
       \Omega_{1ij} = \imag \frac{ \eta_{1i} \eta_{1j}^\ast \gamma_j^\ast}{\gamma_i + \gamma_j^\ast} \, , \qquad
       \Omega_{2ij} =  \frac{ \eta_{2i} \eta_{2j}^\ast}{\gamma_i + \gamma_j^\ast} 
       \, ,
\eez
so that
\bez
     \Omega_{ij} = \frac{ \imag \eta_{1i} \eta_{1j}^\ast \gamma_j^\ast + \eta_{2i} \eta_{2j}^\ast}{\gamma_i + \gamma_j^\ast} 
  =  \frac{ \imag a_{1i}\,  a_{1j}^\ast \gamma_j^\ast \, e^{-\varphi(\gamma_i) - \varphi(\gamma_j^\ast)} + a_{2i} \, a_{2j}^\ast \,  e^{\varphi(\gamma_i)+\varphi(\gamma_j^\ast)} }{\gamma_i + \gamma_j^\ast} \, ,  
\eez
where 
\be
      \varphi(\gamma) = \frac{1}{2} \big( \gamma^{-1} \, x + \gamma \, t \big) 
      \, .   \label{varphi(gamma)}
\ee      
Now 
\bez
     u' = - \sum_{i,j=1}^n a_{2i}^\ast \, a_{1j} \, (\Omega^{-1})_{ij} \, 
          e^{ \varphi(\gamma_i^\ast) - \varphi(\gamma_j) } 
\eez
is an $n$-soliton solution of the Fokas-Lenells equation (also see 
\cite{Lene10,Mats12a,Guo+Ling12,WXL20,LZLX20,LWZ22} for other expressions and derivations). 
For $n=1$, we recover its single soliton solution
\bez
  u' = - \frac{ a_1 a_2^\ast \, (\gamma + \gamma^\ast) }{ 
  	      |a_2|^2 \, e^{2 \varphi(\gamma)} 
     	+ \imag |a_1|^2 \gamma^\ast \, e^{-2 \varphi(\gamma^\ast)} } \, .
\eez
A similar expression can be found in \cite{LWZ22}. Also see 
\cite{Lene+Foka09,Lene09,Mats12a} for earlier appearances. 
\hfill $\Box$
\end{example}

\begin{example}
	\label{ex:2x2Jordan}
Let $\Gamma$ be a $2 \times 2$ Jordan block with eigenvalue $\gamma$, i.e.,
\bez
	\Gamma = \left( \begin{array}{cc} \gamma & 0 \\ 1 & \gamma 
	              \end{array} \right) \, .
\eez 	
Then we have
\bez
 \eta_1 =\left( \begin{array}{c} a_{11}   \\ 
 	 a_{12} + a_{11} \rho  
 	            \end{array} \right) \, e^{ -\varphi(\gamma)} \, , \qquad 
 \eta_2 = \left( \begin{array}{c} a_{21}  \\ 
 	a_{22} - a_{21} \rho  
                \end{array} \right) \, e^{\varphi(\gamma)}\, ,
\eez
where $\varphi(\gamma)$ is again given by (\ref{varphi(gamma)}), and 
\bez
    \rho := \frac{1}{2} \big( \gamma^{-2} x - t \big) \, .
\eez 
Assuming $\mathrm{Re}(\gamma) \neq 0$, the solution of the Lyapunov equation is given by 
\bez
  \Omega = \Omega_1 + \Omega_2 = \imag \tilde{\Omega}_1 \Gamma^\dagger + \Omega_2 
\eez
(cf. (\ref{tOm1})), with (cf. Example~3.1 in \cite{CMH17}) 
\bez
  \tilde{\Omega}_1 = \frac{1}{\kappa}
  \left( \begin{array}{cc} |\eta_{11}|^2 & \eta_{11} \tilde{\eta}_{12}^\ast  \\
  \eta_{11}^\ast \tilde{\eta}_{12} & |\tilde{\eta}_{12}|^2 + \kappa^{-2} |\eta_{11}|^2 \end{array} \right)  \, , \qquad
  \Omega_2 = \frac{1}{\kappa}
   \left( \begin{array}{cc} |\eta_{21}|^2 & \eta_{21} \tilde{\eta}_{22}^\ast  \\
   	\eta_{21}^\ast \tilde{\eta}_{22} & |\tilde{\eta}_{22}|^2 + \kappa^{-2} |\eta_{21}|^2 \end{array} \right) \, ,
\eez 
where $\eta_{ij}$ is the $j$th component of the vector $\eta_i$, and 
\bez
    \tilde{\eta}_{i2} := \eta_{i2} - \kappa^{-1} \eta_{i1} \, , 
    \qquad 
    \kappa := 2 \, \mathrm{Re}(\gamma) \, .
\eez
Hence, $\Omega$ has the components
\bez
 \Omega_{11} &=& \kappa^{-1} \big( \imag \gamma^\ast \, |\eta_{11}|^2 
      + |\eta_{21}|^2 \big) \, , \\
 \Omega_{12} &=& \kappa^{-1} \big( \imag |\eta_{11}|^2 + \imag \gamma^\ast \eta_{11} \tilde{\eta}_{12}^\ast + \eta_{21} \tilde{\eta}_{22}^\ast \big)
        \, , \\
 \Omega_{21} &=& \kappa^{-1} \big( \imag \gamma^\ast \eta_{11}^\ast \tilde{\eta}_{12} + \eta_{21}^\ast \tilde{\eta}_{22} \big) \, , \\
 \Omega_{22} &=& \kappa^{-1} \big( \imag \eta_{11}^\ast \tilde{\eta}_{12} + \imag \gamma^\ast (|\tilde{\eta}_{12}|^2 + \kappa^{-2} |\eta_{11}|^2)
 + |\tilde{\eta}_{22}|^2 + \kappa^{-2} |\eta_{21}|^2 \big) \, .       
\eez
Using
\bez
   \Omega^{-1} = \frac{1}{\mathrm{det}(\Omega)}
 \left( \begin{array}{cc} \Omega_{22} & -\Omega_{12} \\ -\Omega_{21} & \Omega_{11} \end{array} \right)\, ,
\eez
(\ref{FL_sol}) leads to the following solution of the Fokas-Lenells equation, 
\bez
   u' = \frac{F}{\mathrm{det}(\Omega)} 
\eez
with
\bez
 F &=& e^{-2 \imag \mathrm{Im}(\varphi(\gamma))} \, \Big( (a_{11} a_{21}^\ast |\rho|^2-a_{11}a_{22}^\ast\rho + a_{12} a_{21}^\ast \rho^\ast - a_{12} a_{22}^\ast) \, \Omega_{11} + a_{21}^\ast (a_{11} \rho + a_{12}) \, \Omega_{12} \\
&& -a_{11} (a_{21}^\ast \rho^\ast - a_{22}^\ast) \, \Omega_{21} - a_{11} a_{21}^\ast \, \Omega_{22} \Big) \, , \\
 \mathrm{det}(\Omega)&=& \kappa^{-4}\,\big(|a_{21}|^4e^{4\,\mathrm{Re}(\varphi(\gamma))}-\gamma^{\ast 2}|a_{11}|^4e^{-4\,\mathrm{Re}(\varphi(\gamma))}\big)+\imag\kappa^{-2}\big(\gamma^\ast |2a_{11}a_{21}\rho-\mathrm{det}(a_1,a_2)|^2\\
 &&+2|a_{11}|^2|a_{21}|^2\rho\big)+\imag\kappa^{-2}a_{11}^\ast a_{21}^\ast\big(2\gamma^\ast\kappa^{-2}a_{11}a_{21}-\mathrm{det}(a_1,a_2)\big)\, ,
\eez
where $\mathrm{det}(a_1,a_2) = a_{11} a_{22} - a_{12} a_{21}$.
Besides the exponential dependence, this also depends on the independent variables $x$ and $t$ through the linear expression $\rho$. An equivalent 
solution has been obtained in \cite{LWZ22} via Hirota's bilinearization method (``double-pole solution"), also see 
\cite{ZQH23}. Fig.~\ref{fig:2x2Jordan} shows a plot of the absolute value of $u'$ for special parameters. 
\begin{figure}[h]
\begin{center}
	\includegraphics[scale=.5]{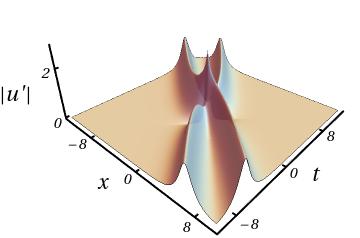} 
	\parbox{15cm}{
	\caption{Plot of the absolute value of a ``positon" solution of the Fokas-Lenells equation 
	from the class in Example~\ref{ex:2x2Jordan}. Here we chose $\gamma=a_{11}=a_{12}=a_{21}=a_{22}=1$. \label{fig:2x2Jordan} } 
	             } 
\end{center}
\end{figure}\hfill $\Box$
\end{example}

\begin{remark}
More generally, let $\Gamma$ be an $n \times n$ Jordan block with eigenvalue $\gamma$, i.e.,
\be
 \Gamma_{(n)} = \left( \begin{array}{ccccc} \gamma & 0&\ldots&\ldots&0 \\ 1 & \gamma&\ddots&\ddots&0\\
	0&\ddots&\ddots&\ddots&\vdots\\
	\vdots&\ddots&\ddots&\ddots&0\\
	0&\ldots&0&1&\gamma \end{array} \right) \, .   \label{Jordan_n}
\ee 		
For increasing size Jordan blocks, the solutions of the Lyapunov equation 
are nested. This means that the solution has the structure
\bez
     \Omega_{(n)} = \left( \begin{array}{cc} \Omega_{(n-1)} & \alpha_n \\ \beta_n & \omega_n 
     \end{array} \right) \, ,
\eez	
with certain $(n-1)$-component (column, respectively row) vectors $\alpha_n, \beta_n$, and a scalar $\omega_n$, all involving the solution 
of the linear system. Using the solution $\Omega_{(2)}$ in Example~\ref{ex:2x2Jordan}, one can easily compute $\Omega_{(3)}$, and so forth.
We refer to \cite{CMH17} for details. \hfill $\Box$
\end{remark}

Solutions, which we obtain directly using an $n \times n$ Jordan block as ``spectral matrix", are obtained in other approaches by taking special limits of ordinary $n$-soliton solutions, in which the eigenvalues coincide. Sometimes resulting solutions are called ``positons".

\subsection{Vectorial Darboux transformation for the Fokas-Lenells equation}
\label{subsec:FLred_nontrivial_seed}

\begin{theorem}
	\label{thm:FL_DT}
Let $\Gamma$ be an invertible constant $n \times n$ matrix, such that $\Gamma$ and $-\Gamma^\dagger$ have no eigenvalue in common. Let $u$ be a solution of the Fokas-Lenells equation (\ref{FLeq}). 
Furthermore, let $\eta_1$ and $\tilde{\eta}_2$ be $n$-component column vector solutions of the linear system
\be
	&&  \Gamma \, \eta_{1x} + \frac{1}{2} \eta_1 - u_x \, \tilde{\eta}_2 = 0 \, , \qquad\quad
	\eta_{1t} - \big( \imag |u|^2 I - \frac{1}{2} \Gamma \big) \, \eta_1 - u \, \tilde{\eta}_2 = 0
	 \, ,   \nonumber \\
   && \Gamma \tilde{\eta}_{2x} - \frac{1}{2} \tilde{\eta}_2 + \imag u^\ast_x \, \Gamma \eta_1 = 0 \, , \qquad
     \tilde{\eta}_{2t} + \big( \imag |u|^2 I - \frac{1}{2} \Gamma \big) \, \tilde{\eta}_2 - \imag u^\ast \Gamma \eta_1 = 0 \, .
	\label{FLeq_linsys}
\ee
Furthermore, let $\Omega$ be an $n \times n$ matrix solution of the Lyapunov equation 
\be
   \Gamma \Omega + \Omega \Gamma^\dagger 
 = \imag \eta_1 \eta_1^\dagger \Gamma^\dagger 
   + \tilde{\eta}_2 \, \tilde{\eta}_2^\dagger \, . \label{FLeq_Lyap}
\ee
Then, in any open set of $\mathbb{R}^2$ where $\Omega$ is invertible,
\be
    u' = u - \tilde{\eta}_2^\dagger \, \Omega^{-1} \eta_1   \label{FLeq_u'}
\ee
solves the Fokas-Lenells equation (\ref{FLeq}).
\end{theorem}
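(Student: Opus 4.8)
The plan is to obtain Theorem~\ref{thm:FL_DT} as the reduction $v=u^\ast$ of the vectorial binary Darboux transformation in Theorem~\ref{thm:bDT_KN-1}, following the same strategy that produced Proposition~\ref{prop:FL_bDT_zero_seed} but now with a nontrivial seed. The reduction is implemented through the substitution
\be
   \Delta = -\Gamma^\dagger \, , \qquad
   \theta_1 = \imag \eta_1^\dagger \Gamma^\dagger \, , \qquad
   \theta_2 = \tilde{\eta}_2^\dagger \, ,
   \label{FLred_cond}
\ee
which links the row-vector data $(\theta_1,\theta_2)$ of Theorem~\ref{thm:bDT_KN-1} to the column-vector data $(\eta_1,\tilde{\eta}_2)$ by conjugate transposition. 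The first step is to verify that, with the seed constraint $v=u^\ast$, this substitution maps the eight-equation linear system (\ref{FL_linsys}) consistently onto the four-equation system (\ref{FLeq_linsys}): one takes the conjugate transpose of the $\theta$-equations and checks that they coincide with the $\eta$-equations after using (\ref{FLred_cond}) and $v=u^\ast$. This is a matching computation; the appearance of the factor $\imag u_x^\ast \Gamma$ and the $\imag|u|^2$ terms in (\ref{FLeq_linsys}) should fall out once $\phi_{12x}=-\imag v(1+\imag u_xv)$ (noted in the proof of Theorem~\ref{thm:bDT_KN-1}) is rewritten under the reduction.

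The second step is to show that the Sylvester equation (\ref{Sylv}) collapses to the single Lyapunov equation (\ref{FLeq_Lyap}). Substituting (\ref{FLred_cond}) into (\ref{Sylv}) gives $\Gamma\Omega+\Omega\Gamma^\dagger = \eta_1(\imag\eta_1^\dagger\Gamma^\dagger)+\tilde{\eta}_2\tilde{\eta}_2^\dagger$, which is exactly (\ref{FLeq_Lyap}); the spectrum condition on $\Gamma$ and $-\Gamma^\dagger$ guarantees a unique $\Omega$, and as in the zero-seed proof this uniqueness forces the auxiliary equations (\ref{FL_Om}) to be automatically satisfied, so they need not be imposed separately. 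The decomposition $\Omega=\imag\tilde{\Omega}_1\Gamma^\dagger+\Omega_2$ with Hermitian $\tilde{\Omega}_1,\Omega_2$ used in Proposition~\ref{prop:FL_bDT_zero_seed} carries over and yields the Hermiticity relation $(\Gamma\Omega)^\dagger = -\Gamma\Omega$ up to the seed-dependent pieces; I would record whatever version of this identity is needed for the reality check below.

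The third and decisive step is to verify that the generated pair $(u',v')$ from (\ref{u',v'}) actually satisfies the reduction $v'=u'^\ast$, so that $u'$ solves the Fokas-Lenells equation rather than merely the coupled system. Under (\ref{FLred_cond}) the formula for $u'$ in (\ref{u',v'}) becomes $u'=u-\tilde{\eta}_2^\dagger\Omega^{-1}\eta_1$, which is (\ref{FLeq_u'}). One then computes $u'^\ast$ and the reduced expression for $v'$ and checks they agree, mimicking the final display of the zero-seed proof where the identity
\bez
   u'^\ast (1-\theta_2\Omega^{-1}\Gamma^{-1}\eta_2)
   = \cdots = \text{numerator of } v'
\eez
was established using $\Omega_2^\dagger=\Omega_2$ and the Lyapunov structure. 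I expect this reality check to be the main obstacle: with a nonzero seed the cross terms no longer cancel as cleanly, and one must exploit the precise seed-adapted Hermiticity of $\Omega$ together with the seed equations satisfied by $u$ to force $v'=u'^\ast$. Once this is secured, invoking Theorem~\ref{thm:bDT_KN-1} and Proposition~\ref{prop:pKN-1} finishes the argument, since $u'$ then solves (\ref{pKN-1}) with $v'=u'^\ast$, i.e.\ it solves (\ref{FLeq}).
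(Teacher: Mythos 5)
Your overall strategy---reduce Theorem~\ref{thm:bDT_KN-1} by expressing $(\Delta,\theta_1,\theta_2)$ in terms of $(\Gamma,\eta_1,\eta_2)$ via conjugate transposition, collapse the Sylvester equation (\ref{Sylv}) to a Lyapunov equation, and then prove $v'=u'^\ast$---is exactly the paper's strategy. But your reduction ansatz, $\Delta=-\Gamma^\dagger$, $\theta_1=\imag\eta_1^\dagger\Gamma^\dagger$, $\theta_2=\tilde{\eta}_2^\dagger$, is the \emph{unmodified} zero-seed ansatz (\ref{red_cond_zero_seed}), and for $u\neq 0$ it fails: the step you describe as ``a matching computation'' does not go through. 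Concretely, substitute your ansatz and $v=u^\ast$ into the second $\theta$-equation of (\ref{FL_linsys}) and take the conjugate transpose; you obtain
\bez
   \Gamma\,\tilde{\eta}_{2x} - \frac{1}{2}\,\tilde{\eta}_2
   + \imag u_x^\ast u\,\tilde{\eta}_2 + \imag u_x^\ast\,\Gamma\,\eta_1 = 0 \, ,
\eez
which differs from the third equation of (\ref{FLeq_linsys}) by the term $\imag u_x^\ast u\,\tilde{\eta}_2$, not zero for a nontrivial seed. The Sylvester equation misbehaves as well: writing $\tilde{\eta}_2=\eta_2-\imag u^\ast\eta_1$, your right-hand side $\eta_1\theta_1+\eta_2\theta_2$ acquires the extra cross term $\imag u^\ast\eta_1\tilde{\eta}_2^\dagger$ beyond (\ref{FLeq_Lyap}); if instead you identify $\eta_2\equiv\tilde{\eta}_2$ outright, the Lyapunov equation comes out right but the linear systems then disagree by the terms $\imag u_xu^\ast\eta_1$, etc. The missing idea is that the reduction ansatz itself must be seed-dependent. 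The paper uses (\ref{FL_red}),
\bez
  \theta_1 = \eta_1^\dagger\,(\imag \Gamma^\dagger + |u|^2 I) - \imag u^\ast\,\eta_2^\dagger
  = \imag\eta_1^\dagger\Gamma^\dagger - \imag u^\ast\,\tilde{\eta}_2^\dagger \, , \qquad
  \theta_2 = \imag u\,\eta_1^\dagger + \eta_2^\dagger = \tilde{\eta}_2^\dagger \, ,
\eez
together with the change of variables $\tilde{\eta}_2:=\eta_2-\imag u^\ast\eta_1$. Only with these $u$-dependent corrections do the four $\theta$-equations of (\ref{FL_linsys}) become consequences of the $\eta$-equations, the $\eta$-equations turn into (\ref{FLeq_linsys}), and the right-hand side of (\ref{Sylv}) becomes exactly $\imag\eta_1\eta_1^\dagger\Gamma^\dagger+\tilde{\eta}_2\tilde{\eta}_2^\dagger$.

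Beyond this, the decisive step---verifying $v'=u'^\ast$---is only conjectured in your proposal (``I expect this reality check to be the main obstacle''), not carried out. In the paper it rests on the identity (\ref{key}), $\tilde{\eta}_2\tilde{\eta}_2^\dagger=\Gamma\Omega+\Omega^\dagger\Gamma^\dagger$, which is obtained exactly as in the zero-seed case by splitting the Lyapunov equation into its $\eta_1$- and $\tilde{\eta}_2$-parts and using the spectrum condition; after that, a short computation shows $u'^\ast H=G$ where $v'=G/H$. So your plan is structurally correct, but as written it stalls at the first step (wrong ansatz), and the key verification is left open rather than proved.
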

\begin{proof}
By extending the reduction conditions (\ref{red_cond_zero_seed}) to
non-vanishing seed $u$ via
\be
\Delta = - \Gamma^\dagger \, , \quad
\theta_1 = \eta_1^\dagger \, (\imag \Gamma^\dagger + |u|^2 I) - \imag u^\ast \, \eta_2^\dagger \, , \quad
\theta_2 = \imag u \, \eta_1^\dagger + \eta_2^\dagger \, ,  \label{FL_red}
\ee
the four equations of (\ref{FL_linsys}), involving $\theta_1$ and $\theta_2$,  by setting $v = u^\ast$ become
\bez
&& (\Gamma + \imag |u|^2 I) \Big( \Gamma \, \eta_{1x} + (\frac{1}{2} + \imag u_x u^\ast) \, \eta_1 - u_x \, \eta_2 \Big) \\
&& \hspace{1cm} - u \, \Big( \Gamma \, \eta_{2x} - (\frac{1}{2} + \imag u_x u^\ast) \, \eta_2 
+ \imag u^\ast \, (1 + \imag u_x u^\ast) \, \eta_1 \Big) = 0 \, , \\
&& \Big( \Gamma \, \eta_{2x} - (\frac{1}{2} + \imag u_x u^\ast) \, \eta_2 
+ \imag u^\ast \, (1 + \imag u_x u^\ast) \, \eta_1 \Big)
- \imag u^\ast \, \Big(\Gamma \, \eta_{1x} + (\frac{1}{2} + \imag u_x u^\ast) \, \eta_1 - u_x \, \eta_2 \Big) = 0 \, , \\
&& (\Gamma + \imag |u|^2 I) \Big( \eta_{1t} + \frac{1}{2} \Gamma \, \eta_1 - u \, \eta_2 \Big) 
- u \, \Big( \eta_{2t} - \frac{1}{2} \Gamma \, \eta_2 - (\imag v_t - u v^2) \, \eta_1 \Big) = 0 \, , \\
&&  \Big( \eta_{2t} - \frac{1}{2} \Gamma \, \eta_2 - (\imag v_t - u v^2) \, \eta_1 \Big) - \imag u^\ast \, \Big( \eta_{1t} + \frac{1}{2} \Gamma \, \eta_1 - u \, \eta_2 \Big) = 0 \, .
\eez
They are satisfied as a consequence of the equations for $\eta_1$ and 
$\eta_2$ in (\ref{FL_linsys}). The linear system in Theorem~\ref{thm:bDT_KN-1} 
thus reduces to 
\bez
  &&  \Gamma \, \eta_{1x} + (\frac{1}{2} + \imag u_x u^\ast) \, \eta_1 - u_x \, \eta_2 = 0 \, , \qquad
  \eta_{1t} + \frac{1}{2} \Gamma \, \eta_1 - u \, \eta_2 = 0
  \, ,  \\
  && \Gamma \, \eta_{2x} - (\frac{1}{2} + \imag u_x u^\ast) \, (\eta_2 - \imag u^\ast \eta_1) 
  + \frac{1}{2} \imag u^\ast \, \eta_1 = 0 \, , \qquad
  \eta_{2t} - \frac{1}{2} \Gamma \, \eta_2 + (\imag u_t + |u|^2 u)^\ast \, \eta_1 = 0 \, .
\eez
In terms of 
\bez
     \tilde{\eta}_2 := \eta_2 - \imag u^\ast \eta_1  \, ,
\eez
the latter system can be equivalently replaced by (\ref{FLeq_linsys}).
The Sylvester equation (\ref{Sylv}) becomes the Lyapunov equation (\ref{FLeq_Lyap}). We have 
$\Omega = \Omega_1 + \Omega_2$, where 
\bez
 \Gamma \Omega_1 + \Omega_1 \Gamma^\dagger = \imag \eta_1 \eta_1^\dagger \Gamma^\dagger \, , \qquad
 \Gamma \Omega_2 + \Omega_2 \Gamma^\dagger = \tilde{\eta}_2 \, \tilde{\eta}_2^\dagger \, .
\eez
As in the proof of Proposition~\ref{prop:FL_bDT_zero_seed}, we obtain
\be
    \tilde{\eta}_2 \, \tilde{\eta}_2^\dagger = \Gamma \Omega + \Omega^\dagger \Gamma^\dagger \, .  \label{key}
\ee
$u'$ in (\ref{u',v'}) takes the form (\ref{FLeq_u'}), and $v'$ can be written as
\bez
     v' = \frac{G}{H} \, , 
\eez
with
\bez
 G := u^\ast + \big( \Gamma \eta_1 - u \, \tilde{\eta}_2 \big)^\dagger \, (\Gamma \Omega)^{-1} \tilde{\eta}_2  \, , \qquad
 H := 1 - \tilde{\eta}_2^\dagger \, (\Gamma \Omega)^{-1} \tilde{\eta}_2 \, .
\eez
Next we compute
\bez
   u'^\ast H &=& \big( u^\ast - \eta_1^\dagger \, \Omega^{\dagger -1} \tilde{\eta}_2 \big) \big( 1 - \tilde{\eta}_2^\dagger \, (\Gamma \Omega)^{-1} \tilde{\eta}_2 \big) \\
   &=& u^\ast - u^\ast \tilde{\eta}_2^\dagger \, (\Gamma \Omega)^{-1} \tilde{\eta}_2
     - \eta_1^\dagger \, \Omega^{\dagger -1} \tilde{\eta}_2
     + \eta_1^\dagger \, \Omega^{\dagger -1} \tilde{\eta}_2 \tilde{\eta}_2^\dagger 
     \, (\Gamma \Omega)^{-1} \tilde{\eta}_2 = G
   \, ,
\eez 
where we used (\ref{key}) in the last step. We have thus shown that $v' = u'^\ast$. 
Since, according to Theorem~\ref{thm:bDT_KN-1}, $(u',v')$ solves the 
coupled Fokas-Lenells equations (\ref{pKN-1}), it follows that 
$u'$ solves the Fokas-Lenells equation (\ref{FLeq}). 
\end{proof}

\begin{remark}
Dropping the spectrum condition, Theorem~\ref{thm:FL_DT} remains true if we supplement the assumptions there by the equations	
\be
  \Omega_x  = \imag \eta_{1x} \, \eta_1^\dagger 
    + (\tilde{\eta}_{2x} + \imag u^\ast_x \eta_1 ) \, \tilde{\eta}_2^\dagger \,  \Gamma^{\dagger -1} \, , \qquad 
  \Omega_t = \frac{1}{2} \Big( -\imag \eta_1 \eta_1^\dagger  \Gamma^\dagger + 2 \imag u^\ast \eta_1 \tilde{\eta}_2^\dagger
  + \tilde{\eta}_2 \tilde{\eta}_2^\dagger \Big) \, ,
    \label{FLeq_Om_deriv}
\ee	
and require in addition that (\ref{key}) holds. 

(\ref{FLeq_Om_deriv}) results from (\ref{FL_Om}), using (\ref{FL_red}). As a consequence of (\ref{FLeq_Om_deriv}) and the linear system (\ref{FLeq_linsys}), it follows that
\bez
   \big( \Gamma \Omega + \Omega^\dagger \Gamma^\dagger - \tilde{\eta}_2 \, \tilde{\eta}_2^\dagger \big)_x = 0 \, , \qquad
   \big( \Gamma \Omega + \Omega^\dagger \Gamma^\dagger - \tilde{\eta}_2 \, \tilde{\eta}_2^\dagger \big)_t = 0 \, ,
\eez
so that (\ref{FLeq_Om_deriv}) is indeed compatible with (\ref{key}).

If the spectrum condition for $\Gamma$ is satisfied, 
then (\ref{key}) and (\ref{FLeq_Om_deriv}) are automatically 
satisfied as a consequence of the Lyapunov equation (\ref{FLeq_Lyap}).
\hfill $\Box$
\end{remark}

\begin{remark}
	\label{rem:FL_superposition}
Let $(\Gamma^{(i)},\eta^{(i)}_1,\tilde{\eta}^{(i)}_2,\Omega^{(i)})$,
$i=1,2$, be data that determine solutions ${u'}^{(i)}$ of the 
Fokas-Lenells equation for the same seed solution $u$. Let
\bez
   \Gamma = \left( \begin{array}{cc} \Gamma^{(1)} & 0 \\
   0 & \Gamma^{(2)} \end{array} \right) \, , \quad
   \eta_1 = \left(\begin{array}{c} \eta^{(1)}_1 \\ \eta^{(2)}_1
   	\end{array} \right) \, , \quad
   \tilde{\eta}_2 = \left(\begin{array}{c} \tilde{\eta}^{(1)}_2 \\ \tilde{\eta}^{(2)}_2
   \end{array} \right) \, , \quad	
   \Omega = \left( \begin{array}{cc} \Omega^{(1)} & \Omega^{(12)} \\
   	\Omega^{(21)} & \Omega^{(2)} \end{array} \right) \, .
\eez
The linear system (\ref{FLeq_linsys}) is then satisfied.	
In order that the above composed data also determine a 
solution, which is then a nonlinear  
superposition of ${u'}^{(1)}$ and ${u'}^{(2)}$, essentially 
one has therefore only to determine the matrices $\Omega^{(12)}$ and $\Omega^{(21)}$. The Lyapunov equation (\ref{FLeq_Lyap}) 
reduces to the two Sylvester equations
\bez
 &&  \Gamma^{(1)} \Omega^{(12)} + \Omega^{(12)} \Gamma^{(2)\dagger}
   = \imag \eta^{(1)}_1 \eta^{(2)\dagger}_1 \Gamma^{(2)\dagger}
   + \tilde{\eta}^{(1)}_2 \tilde{\eta}^{(2)\dagger}_2 \, , 
    \\
 && \Gamma^{(2)} \Omega^{(21)} + \Omega^{(21)} \Gamma^{(1)\dagger}
 = \imag \eta^{(2)}_1 \eta^{(1)\dagger}_1 \Gamma^{(1)\dagger}
 + \tilde{\eta}^{(2)}_2 \tilde{\eta}^{(1)\dagger}_2 \, .
\eez
If $\Gamma^{(1)}$ and $-\Gamma^{(2)\dagger}$ have no eigenvalue in common, these equations possess unique solutions. Otherwise additional equations, arising from (\ref{key}) and (\ref{FLeq_Om_deriv}), have to be taken into account.	
\hfill $\Box$
\end{remark}

\begin{remark}
For $n=1$, the linear system (\ref{FLeq_linsys}) can be written as
\bez
    \left(\begin{array}{c} \eta_1 \\ \tilde{\eta}_2 \end{array} \right)_x
  = \gamma^{-1} \left(\begin{array}{cc} - \frac{1}{2} & u_x \\
     - \imag \gamma \, u_x^\ast & \frac{1}{2} \end{array} \right) 
    \left(\begin{array}{c} \eta_1 \\ \tilde{\eta}_2 \end{array} \right) \, , 
    \quad
 \left(\begin{array}{c} \eta_1 \\ \tilde{\eta}_2 \end{array} \right)_t
= \left(\begin{array}{cc} \imag |u|^2 - \frac{1}{2} \gamma & u \\
	 \imag \gamma \, u^\ast & - (\imag |u|^2 - \frac{1}{2} \gamma) \end{array} \right) 
\left(\begin{array}{c} \eta_1 \\ \tilde{\eta}_2 \end{array} \right) \, .   
\eez
This is a Lax pair for the Fokas-Lenells equation with spectral parameter 
$\gamma$, which in (\ref{FLeq_linsys}) is promoted to a matrix. \hfill $\Box$
\end{remark}

\section{Solutions of the Fokas-Lenells equation from a plane wave seed}
\label{sec:FLeq_pw_seed}
As the seed in the Darboux transformation, expressed in Theorem~\ref{thm:FL_DT}, we choose the plane wave solution (\ref{FL_pw}), i.e., 
\bez
      u = A \, e^{\imag \varphi} \, , \qquad 
      \varphi := \alpha \, x + (2 |A|^2 - \alpha^{-1}) \, t \, ,
\eez	
where $\alpha \in \mathbb{R} \setminus \{0\}$ and 
$A \in \mathbb{C} \setminus \{0\}$. Writing
\bez
   \eta_1 = e^{\imag \varphi/2} \, \chi_1 \, , \qquad
   \tilde{\eta}_2 = e^{-\imag \varphi/2} \, \chi_2 \, , 
\eez
with column vectors $\chi_1, \chi_2$, (\ref{FLeq_linsys}) 
decouples to
\be
 && \chi_{1xx} + R^2 \, \chi_1 = 0 \, ,  \qquad
\chi_{1t} = - \imag \alpha^{-1} \Gamma \chi_{1x} \, , 
          \label{FLeq_ex_pw_2}  \\
 &&  \chi_2 = - \frac{\imag}{\alpha A} \Big( \Gamma \, \big(\chi_{1x} + \frac{1}{2} \imag \alpha \chi_1 \big) + \frac{1}{2} \chi_1 \Big) 
     \, ,  \label{FLeq_ex_pw_1} 
\ee
where
\bez
    R^2 := \frac{1}{4} (\alpha^2 I - \Gamma^{-2}) 
    + \imag \alpha^2 \big( |A|^2 - \frac{1}{2} \alpha^{-1} \big) \Gamma^{-1} \, .
\eez
If $R^2$ is invertible, it possesses an invertible matrix square root $R$.
The two equations in (\ref{FLeq_ex_pw_2}) are then solved by  
\be
  \chi_1 = e^{\imag \Phi} \, a_1 + e^{-\imag \Phi} \, b_1 \, , 
  \qquad \Phi := R \, x  - \imag \alpha^{-1} \Gamma R \, t  \, ,
       \label{FLeq_pw_chi1}
\ee
with constant $n$-component column vectors $a_1,b_1$, and 
(\ref{FLeq_ex_pw_1}) leads to
\be
     \chi_2 = - \frac{\imag}{\alpha A} \, \Big( 
   \big( \frac{1}{2} I + \imag \Gamma \, (\frac{1}{2} \alpha I + R) \big) \, e^{\imag \Phi} \, a_1
   + \big( \frac{1}{2} I + \imag \Gamma \, (\frac{1}{2} \alpha I - R) \big) \, e^{-\imag \Phi} \, b_1 \Big) \, .
      \label{FLeq_pw_chi2}
\ee     
It remains to obtain $\Omega$ from the Lyapunov equation (\ref{FLeq_Lyap}), 
which is now
\bez
   \Gamma \Omega + \Omega \Gamma^\dagger 
   = \imag \chi_1 \chi_1^\dagger \Gamma^\dagger 
   + \chi_2 \, \chi_2^\dagger \, .
\eez
Then, via (\ref{FLeq_u'}), we get
\be
 u' = e^{\imag \varphi} \big( A - \chi_2^\dagger \Omega^{-1} \chi_1 \big)
    = e^{\imag \varphi} \Big( A - \sum_{i,j=1}^n \chi_{2i}^\ast \, 
    (\Omega^{-1})_{ij} \, \chi_{1j} \Big) 
    \, .          \label{FL_u'_pw}
\ee
\vspace{.2cm}

\noindent
We have to distinguish the following two cases: \\
\textbf{(1)} 
If the spectrum condition for $\Gamma$ holds, then the Lyapunov equation possesses a unique solution, irrespective of the 
$n \times n$ matrix on its right hand side. In particular, 
if $\Gamma = \mathrm{diag}(\gamma_1,\ldots,\gamma_n)$, with 
$\gamma_i \neq -\gamma_j^\ast$, $i,j=1,\ldots,n$, 
the solution $\Omega$ is given by the Cauchy-like matrix with components 
\bez
    \Omega_{ij} = \frac{\imag \chi_{1i} \, \chi_{1j}^\ast \, \gamma_j^\ast 
    	+ \chi_{2i} \, \chi_{2j}^\ast}{\gamma_i + \gamma_j^\ast} 
    	\, .
\eez
\textbf{(2)} 
If the spectrum condition does \emph{not} hold, the condition that the Lyapunov equation has a solution imposes constraints on the 
matrix on its right hand side. Moreover, its solution is then not 
unique. This freedom is fixed by (\ref{FLeq_Om_deriv}), which 
now takes the form
\be
 && \Omega_x = (\imag \chi_{1x} - \frac{1}{2} \alpha \chi_1) \chi_1^\dagger
   + (\chi_{2x} - \frac{1}{2} \imag \alpha \chi_2 + \alpha A^\ast \chi_1)
     \chi_2^\dagger \Gamma^{\dagger -1} \, , \nonumber \\
 && \Omega_t = \frac{1}{2} \big( -\imag \chi_1 \chi_1^\dagger \Gamma^\dagger + 2 \imag A^\ast \chi_1 \chi_2^\ast + \chi_2 \chi_2^\dagger \big) \, ,    \label{FLeq_pw_Om_deriv}  
\ee
and (\ref{key}), which is
\be
    \Gamma \Omega + \Omega^\dagger \Gamma^\dagger 
    = \chi_2 \chi_2^\dagger \, .   \label{key_pw}
\ee
This degenerate case of the Lyapunov equation is important in order 
to recover dark solitons (also see \cite{Manas96} for the case of the NLS equation).

\subsection{Single breather and dark soliton solutions}
	\label{subsec:FLeq_1sol}
For $n=1$, we write $\Gamma = \gamma$ and $R=r$. 
Assuming $r \neq 0$, the solution of the linear 
system, given by (\ref{FLeq_pw_chi1}) and (\ref{FLeq_pw_chi2}),  
becomes
\bez
  && \chi_1 = e^{\imag \Phi} a_1 + e^{-\imag \Phi} b_1 \, , 
      \\
  && \chi_2 = - \frac{\imag}{\alpha A} \, \Big( 
  \big( \frac{1}{2} + \imag \gamma \, (\frac{1}{2} \alpha + r) \big) \, e^{\imag \Phi} \, a_1
  + \big( \frac{1}{2} + \imag \gamma \, (\frac{1}{2} \alpha - r) \big) \, e^{-\imag \Phi} \, b_1 \Big) \, ,
\eez
where 
\bez
  \Phi = r \, (x - \imag \gamma  \, \alpha^{-1} t) \, , \qquad
  r = \pm \sqrt{ \frac{1}{4} (\alpha^2 - \gamma^{-2}) 
	+ \imag \gamma^{-1} \alpha^2 \, ( |A|^2 - \frac{1}{2} \alpha^{-1} )  } 
    \, . 
\eez

\paragraph{(1)}
Let $\kappa := 2 \mathrm{Re}(\gamma) \neq 0$. Then 
$\Omega = (\imag \gamma^\ast |\chi_1|^2 + |\chi_2|^2)/\kappa$
is everywhere invertible and the generated solution of the Fokas-Lenells equation, 
\bez
  u' =  e^{\imag\varphi} \, \Big( A - 
   \frac{\kappa \, \chi_1 \, \chi_2^\ast}{\imag \gamma^\ast |\chi_1|^2 + |\chi_2|^2} \Big) \, , 
\eez
is thus regular on $\mathbb{R}^2$.
As	special cases, it includes Akhmediev- and Kuznetsov-Ma-type breathers.

\paragraph{(2)}
Let $\mathrm{Re}(\gamma)=0$. We write $\gamma = -\imag k$ with real $k$. The Lyapunov equation then becomes the constraint
\bez
	  |\chi_2|^2 = k \, |\chi_1|^2 \, ,
\eez
which requires $k>0$. Since it has to hold for all $x$ and $t$, it can only be satisfied if either $a_1=0$ or $b_1=0$. 
Let us choose $b_1=0$, so that
\bez
 \chi_1 = e^{\imag \Phi} a_1 \, , \qquad 
 \chi_2 = -\frac{\imag}{\alpha A} \big( \frac{1}{2} + k \, (\frac{1}{2} \alpha + r) \big) \, e^{\imag\Phi} a_1 \, .
\eez
Then the above constraint amounts to
\bez
 \frac{1}{\alpha^2 |A|^2} \left| \frac{1}{2} + k \, \big( \frac{1}{2} \alpha + r \big) \right|^2 - k = 0 \, ,
\eez
which is automatically satisfied if 
\bez
  r = \pm \frac{1}{2} \sqrt{ k^{-2}  
   	- 4 k^{-1} \alpha^2 \, ( |A|^2 - \frac{1}{2} \alpha^{-1} )  
   	+ \alpha^2 }
\eez
is imaginary. This condition requires
\bez
  \big( k - 2 ( |A|^2 - \frac{1}{2} \alpha^{-1} ) \big)^2 
      < 4 \, |A|^2 \, ( |A|^2 - \alpha^{-1} )  \, ,
\eez 
so that, in particular, 
\bez
      |A|^2 > \alpha^{-1} \, .
\eez
Noting that now $\Phi = r \, (x-k t/\alpha)$ and $\Phi^\ast = - \Phi$, since $r^\ast = -r$, from (\ref{FLeq_pw_Om_deriv}) we obtain
\bez
 \Omega_x = \frac{|a_1|^2}{2k} \big( 1 - k \, (\alpha+2r) \big) \, e^{2\imag \Phi} \, ,\qquad 
 \Omega_t = - \frac{|a_1|^2}{2\alpha} \big(1 - k \, (\alpha+2r)\big) \, e^{2\imag\Phi} \, .
\eez
This integrates to 
\bez
	\Omega = - \frac{\imag |a_1|^2 \big(1-k\,(\alpha+2r)\big)}{4rk} e^{2\imag \Phi} + c \, |a_1|^2 \, ,
\eez
with a constant $c$, which (\ref{key_pw}) requires to be real. The generated solution of the Fokas-Lenells equation is
\bez
  u' &=& e^{\imag \varphi} \Big(A + \frac{1}{\alpha A^\ast} \, \frac{2 r k \, \big( 1+k\,(\alpha-2r)\big) \, e^{2\imag \Phi}}{\big( 1-k\,(\alpha+2r) \big) \, e^{2\imag \Phi} + 4 \imag r \, k \, c} \Big)  \\ 
  &=& e^{\imag \varphi} \Big( A + \frac{r k \, \big(1+k \, (\alpha-2r) \big)}{ \alpha A^\ast \big(1-k\,(\alpha+2r)\big) } 
  \big( 1 + \mathrm{tanh}(\imag\Phi + \delta) \big) \Big)
   \, ,
\eez
with a complex constant $\delta$ given by
\bez
  e^{2\delta} = \frac{ 1-k\,(\alpha+2r)}{4 \, k \, (\imag r) \, c} \, ,
\eez
assuming $c \neq 0$. 
The solution is regular on $\mathbb{R}^2$. 
The absolute square of $u'$ is given by
\bez
 |u'|^2 = |A|^2 + \frac{2 k \, (\imag r) \, e^{2\imag \Phi}}{ c \, \alpha \, |e^{2(\imag \Phi+\delta)} + 1|^2 } 
\eez
(we recall that $\imag \Phi$ is real).
Thus $u'$ represents a dark soliton ($|u'|^2$ shows a sink in the constant background density) if 
$c \, \alpha \, (\imag r) < 0$,  and a bright soliton if 
$c \, \alpha \, (\imag r) > 0$. 
Fig. \ref{fig:dark soliton} shows the modulus of a dark soliton solution from this class.	
	
\begin{figure}[h]
\begin{center}
\includegraphics[scale=.4]{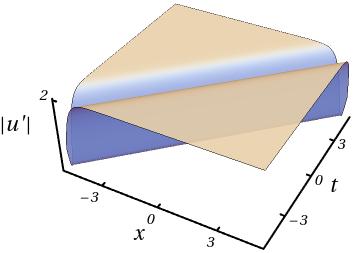} 
	\parbox{15cm}{
	\caption{Plot of the absolute value of a dark soliton solution of the Fokas-Lenells equation from the class in Section~\ref{subsec:FLeq_1sol}(2). Here we chose $\alpha=a_1=c=1,b_1=0,A=2,k=1,$ so that $r=\sqrt{3}\,\imag$. 	
		\label{fig:dark soliton} } 
		} 
\end{center}
\end{figure}

\subsection{Multiple dark/bright solitons}
\label{subsec:dark-bright_solitons}
Let $|A|^2 > \alpha^{-1}$ and 
$\Gamma = - \imag \mathrm{diag}(k_1,\ldots,k_n)$ with
$k_j >0$, $j=1,\ldots,n$, and $k_i \neq k_j$, $i,j=1,\ldots,n$. Let
\bez
&& \chi^{(j)}_1 = e^{\imag \Phi_j} a_j \, , \qquad 
\chi^{(j)}_2 = -\frac{\imag}{\alpha A} \big( \frac{1}{2} + k_j \, (\frac{1}{2} \alpha + r_j) \big) \, e^{\imag \Phi_j} a_j 
\, , \qquad
\Phi_j = r_j \, (x - k_j \alpha^{-1} t)  \, , \\
&& \Omega_j = - \frac{\imag |a_j|^2 \big(1 - k_j \, (\alpha + 2 r_j) \big)}{4 r_j k_j} e^{2\imag \Phi_j} + c_j \, |a_j|^2 \, ,
\eez
where $a_j \in \mathbb{C} \setminus \{0\}$, $c_j \in \mathbb{R} \setminus \{0\}$, and for each $j$, $k_j$ is 
such that
\bez
r_j = \pm \frac{1}{2} \sqrt{ k_j^{-2}  
	- 4 k_j^{-1} \alpha^2 \, ( |A|^2 - \frac{1}{2} \alpha^{-1} )  
	+ \alpha^2 } 
\eez
is imaginary. Setting
\bez
 \chi_1 = \left( \begin{array}{c} \chi^{(1)}_1 \\ \vdots \\ \chi^{(n)}_1
\end{array} \right) \, , \quad
\chi_2 = \left( \begin{array}{c} \chi^{(1)}_2 \\ \vdots \\ \chi^{(n)}_2
\end{array} \right) \, ,  
\eez
and
\bez
 && \Omega_{jj} = \Omega_j \qquad  j=1,\ldots,n \, , \\ 
 && \Omega_{ij} = \frac{\imag}{k_i-k_j} 
 \big( - \chi^{(i)}_1 \chi^{(j)\ast}_1 k_j + \chi^{(i)}_2 \chi^{(j)\ast}_2  \big)  \qquad i \neq j \, ,
\eez 
it follows that 
\bez
   u' &=& e^{\imag \varphi} \big( A 
   - \chi_2^\dagger \, \Omega^{-1} \chi_1 \big)  
\eez
is an $n$-soliton solution of the Fokas-Lenells equation. 
Depending on the signs of $\alpha c_1 (\imag r_1), \ldots,
\alpha c_n (\imag r_n)$, it represents a superposition of $n$ 
dark or bright solitons.  

For $n=2$, we have
\bez
u' = e^{\imag \varphi} \Big( A - \frac{1}{\det(\Omega)} 
 \left[
\chi^{(1)\ast}_2 \big( \Omega_2 \chi^{(1)}_1 - \Omega_{12} \chi^{(2)}_1 \big) + \chi^{(2)\ast}_2 \big( \Omega_1 \chi^{(2)}_1 - \Omega_{21} \chi^{(1)}_1 \big)  \right] \Big)   \, .     
\eez
If $\alpha c_j (\imag r_j) < 0$ (respectively $\alpha c_j (\imag r_j) > 0$), $j=1,2$, this represents a pair of dark (respectively bright) solitons.  
We have a superposition of a dark and a bright 
soliton if $c_1 c_2 r_1 r_2 >0$.  
Also see Fig.~\ref{fig:dark-bright_solitons}.

\begin{figure}[h]
\begin{center}
\includegraphics[scale=.5]{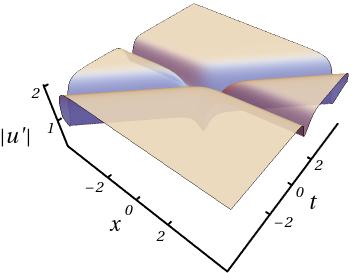} 
\hspace{1cm}
\includegraphics[scale=.5]{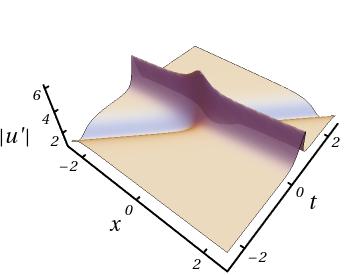}
\parbox{15cm}{
\caption{Plot of the absolute value of a dark-dark (left plot) and a dark-bright (right plot) soliton pair solution of the Fokas-Lenells equation from the class in Section~\ref{subsec:dark-bright_solitons}. Here we chose 
$\alpha=a_1=a_2=c_1 = k_1 =1$, $k_2=10$, $A=2$ and 
$c_2 =1$, respectively $c_2=-1$.
\label{fig:dark-bright_solitons} } 
		} 
\end{center}
\end{figure}

Dark solitons of the Fokas-Lenells equation 
have been obtained before in different ways \cite{Veks11,Mats12b,ZFH13,DTSN23}.

\subsection{Rogue waves}

Another important class of solutions are rogue waves. We will show that they appear when the matrix $R^2$ is degenerate. The system (\ref{FLeq_ex_pw_2}) and (\ref{FLeq_ex_pw_1}) then has solutions 
\emph{not} covered by the expressions for $\chi_1$ and $\chi_2$ in 
(\ref{FLeq_pw_chi1}) and (\ref{FLeq_pw_chi2}). 

\begin{example}
	\label{ex:rw}
Let $n=1$ and $R^2 = 0$, so that
\be
	\gamma = \imag (\alpha^{-1} - 2 |A|^2)
	\pm 2 |A| \sqrt{ \alpha^{-1} - |A|^2 } \, .   \label{gamma_rw}
\ee
In this degenerate case, the solution of the linear system (\ref{FLeq_ex_pw_2})-(\ref{FLeq_ex_pw_1}) is found to be
\bez
	\chi_1 = a \, \rho + b \, , \qquad 
	\chi_2 = - \frac{\imag}{\alpha A} \big( \gamma \, a 
	+ \frac{1}{2} (1+\imag \alpha \gamma) \, (a \, \rho + b) \big) \, , 
\eez
with complex constants $a$, $b$, and 
\be
     \rho := x - \imag \gamma \, \alpha^{-1} t \, .  \label{rw_rho}
\ee
The corresponding solution of the Lyapunov equation is 
\bez
  \Omega = \frac{1}{\kappa} \, \Big( 
  \imag \gamma^\ast \, \left| a \, \rho + b \right|^2 
  + \frac{1}{\alpha^2 |A|^2} \, \Big| \gamma \, a 
  + \frac{1}{2} (1+\imag \alpha \gamma) \, (a \, \rho + b) \Big|^2 \Big) \, ,
\eez
with $\kappa = 2 \mathrm{Re}(\gamma) \neq 0$. The last condition requires that 
the discriminant in (\ref{gamma_rw}) is positive, i.e.,
\bez
     |A|^2 < \alpha^{-1}  \, .
\eez
Since $\Omega \neq 0$ on $\mathbb{R}^2$, we obtain a regular quasi-rational solution
\bez
    u' =  e^{\imag\varphi} \, \Big( A - \frac{1}{\Omega} \, \chi_1 \, 
      \chi_2^\ast \Big) \, .
\eez
This is a counterpart of the Peregrine breather solution of the focusing NLS equation. It models a rogue wave, which has the 
characteristic feature of being localized in space and time. 
Also see Fig.~\ref{fig:rw}. 

\begin{figure}[h]
	\begin{center}
	\includegraphics[scale=.5]{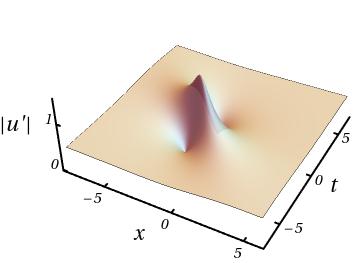} 
	\parbox{15cm}{
	\caption{Plot of the absolute value of a solution of the Fokas-Lenells equation from the class in Example~\ref{ex:rw}. Here we chose $\alpha=a=b=1,A=\frac{1}{2}$, 
	so that $\gamma=\frac{\sqrt{3}}{2}+\frac{1}{2}\,\imag$. 
	The deformation of the constant density background is localized in both dimensions.      \label{fig:rw} } 
			} 
	\end{center}
\end{figure}
\hfill $\Box$
\end{example}

\begin{example}
	\label{ex:rw2}
Let $n=2$ and 	
\bez
    \Gamma = \left( \begin{array}{cc} \gamma & 0 \\ 1 & \gamma 
\end{array} \right) \, , \quad
   \chi_i = \left( \begin{array}{c} \chi_{i1} \\ \chi_{i2} 
   \end{array} \right) \qquad i=1,2 \, ,
\eez 
where $\gamma$ is given by (\ref{gamma_rw}). The spectrum 
condition requires $\mathrm{Re}(\gamma) \neq 0$, hence
we need $\alpha^{-1} > |A|^2$. Then we have
\bez
     R^2 = \left( \begin{array}{cc} 0 & 0 \\ B & 0 
     \end{array} \right) \, , \qquad
     B := \frac{1}{2 \gamma^3} \big( 1 + \imag \alpha \gamma \, (1-2 \alpha |A|^2) \big)  \, .
\eez
(\ref{FLeq_ex_pw_2}) now takes the form
\bez
    \chi_{11xx} = 0 \, , \quad
    \chi_{11t} = - \imag \alpha^{-1} \gamma \, \chi_{11x} \, \quad
    \chi_{12xx} = - B \, \chi_{11} \, , \quad
    \chi_{12t} = - \imag \alpha^{-1} (\chi_{11x} + \gamma \chi_{12x})
    \, ,
\eez
so that
\bez
  \chi_{11} = a_1 \, \rho + a_0 \, , \qquad
  \chi_{12} = -B \, \big( \frac{1}{6} a_1 \rho^3 +  \frac{1}{2} a_0 \rho^2 + b_1 \rho + b_0  \big) - \frac{a_1}{\gamma} x \, ,  
\eez
with $\rho$ given by (\ref{rw_rho}). Using (\ref{FLeq_ex_pw_1}), 
we obtain
\bez
  \chi_{21} = - \frac{\imag}{\alpha A} \Big( 
    \frac{1}{2} (1 + \imag  \alpha \gamma) \chi_{11} 
    + \gamma a_1 \Big) \, , \qquad
  \chi_{22} = - \frac{\imag}{\alpha A} \Big( a_1 + \frac{1}{2} \imag \alpha \, \chi_{11} + \gamma \chi_{12x} + \frac{1}{2} (1+\imag \alpha \gamma) \chi_{12} \Big) 
    \, ,     
\eez
with complex constants $a_0,a_1,b_0,b_1$. 
The solution of the Lyapunov equation is
\bez
    \Omega = \left( \begin{array}{cc} \kappa^{-1} M_{11} &
   - \kappa^{-2} M_{11} + \kappa^{-1} M_{12} \\[4pt] 
   - \kappa^{-2} M_{11} + \kappa^{-1} M_{21} &
   2 \kappa^{-3} M_{11} - \kappa^{-2} (M_{12}+M_{21})	
     + \kappa^{-1} M_{22} \end{array} \right) \, ,
\eez
where $\kappa = 2 \, \mathrm{Re}(\gamma)$ and
\bez
    M = (M_{ij}) = \imag \chi_1 \chi_1^\dagger \Gamma^\dagger 
    + \chi_2 \, \chi_2^\dagger 
    = \left( \begin{array}{cc} \imag \gamma^\ast |\chi_{11}|^2  
    	+ |\chi_{21}|^2 &
      \imag (|\chi_{11}|^2 + \gamma^\ast \chi_{11}  \chi_{12}^\ast ) + \chi_{21} \chi_{22}^\ast \\[2pt] 
      \imag \gamma^\ast \chi_{12} \chi_{11}^\ast + \chi_{22} \chi_{21}^\ast	 &
      \imag (\chi_{12} \chi_{11}^\ast 
      + \gamma^\ast |\chi_{12}|^2) + |\chi_{22}|^2
    	 \end{array} \right) \, .
\eez
Hence
\bez
    \Omega^{-1} = \frac{1}{\det(\Omega)} 
    \left( \begin{array}{cc} 2 \kappa^{-3} M_{11} - \kappa^{-2} (M_{12}+M_{21}) + \kappa^{-1} M_{22} &
    \kappa^{-2} M_{11} - \kappa^{-1} M_{12} \\[4pt] 
    \kappa^{-2} M_{11} - \kappa^{-1} M_{21} &
    \kappa^{-1} M_{11} \end{array} \right) \, ,
\eez
with
\bez
    \det(\Omega) = \kappa^{-4} M_{11}^2 + \kappa^{-2} \det(M) \, .
\eez
Inserting the expressions for $\chi_1,\chi_2$ and $\Omega^{-1}$ 
in (\ref{FL_u'_pw}), determines a quasi-rational solution of the Fokas-Lenells equation. 
For special data, the absolute value of the solution 
is plotted in Fig.~\ref{fig:rw2}.

\begin{figure}[h]
	\begin{center}
	\includegraphics[scale=.5]{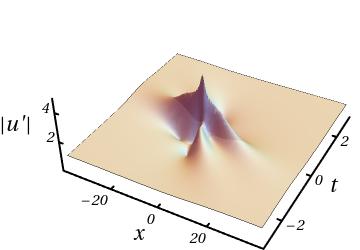} 
	\parbox{15cm}{
	\caption{Plot of the absolute value of a solution of the Fokas-Lenells equation from the class in Example~\ref{ex:rw2}. Here we chose $\alpha=1/4$ and $A=a_0=a_1=b_0=b_1=1$.      \label{fig:rw2} } 
		} 
	\end{center}
\end{figure}
\hfill $\Box$
\end{example}

\begin{remark}
Generalizing the preceding examples, an $n$-th order rogue wave is 
obtained by choosing (\ref{Jordan_n}) for $\Gamma$, with the eigenvalue
$\gamma$ given by (\ref{gamma_rw}). The matrix $R^2$ is then nilpotent 
of order $n$ and one can use results in \cite{CMH17} to elaborate
the corresponding class of solutions of the Fokas-Lenells equation. 
More generally, by taking any $\Gamma$ of the form (\ref{Gamma_bd}), where 
now all Jordan blocks have the same eigenvalue (\ref{gamma_rw}), nonlinear 
superpositions of rogue waves are obtained. Finally, we can also ``superpose"
in this way rogue waves and breathers by allowing also eigenvalues 
of Jordan blocks different from (\ref{gamma_rw}). This quickly 
results in very lengthy expressions, but examples are easily 
worked out with the help of computer algebra. 
\hfill $\Box$
\end{remark}

Rogue waves (of an equivalent form) of the Fokas-Lenells equation  
have also been obtained in \cite{HXP12,XHCP15,WHQGM19}, using a scalar Darboux 
transformation.

\section{Final remarks}
\label{sec:concl}

We derived a vectorial binary Darboux transformation for the  
coupled Fokas-Lenells equations, which is the first member of the ``negative" 
part of the Kaup-Newell hierarchy. Furthermore, we were able to reduce 
it to a vectorial Darboux transformation for the  
Fokas-Lenells equation (\ref{FLeq}). In particular, this greatly improves 
recent publications \cite{Ye+Zhang23,LLZ24}, where a related approach does
not extend beyond the case of vanishing seed. 
\vspace{.2cm}

The bidifferential calculus, which we used in this work, is primarily suited to treat the first member of the negative part of the AKNS hierarchy (cf. \cite{MH23}). The coupled Fokas-Lenells equations are not so straightly embedded in this framework (see (\ref{FL_red})), which is the source of the difficulty we met to restrict the map $(u,v) \mapsto (u',v')$, given by the binary Darboux transformation with fixed parameters, in such a way that it preserves the reduction condition (\ref{red}). Finally, this problem was solved by imposing the rather complicated conditions (\ref{u,v}) on the data entering the binary Darboux transformation.   
Perhaps there is another bidifferential calculus (maybe corresponding to another reduction of the sdYM equations as that used in \cite{LLZ24}), which leads more directly to the coupled Fokas-Lenells equations and then allows a simpler implementation of the reduction (\ref{red}).
\vspace{.2cm}

In Section~\ref{sec:FLeq_pw_seed} we derived solutions of the Fokas-Lenells equation by using a plane wave solution as the seed in the vectorial Darboux 
transformation. This reaches breathers, dark and bright solitons, and rogue waves. As in the NLS case \cite{CMH17}, an $n$-th order rogue wave is characterized by data, where the matrix $\Gamma$ is an $n \times n$ Jordan block with a special eigenvalue (see (\ref{gamma_rw})).     
\vspace{.2cm}

Comprehensive results about solitons on a plane wave background have already been obtained before, in particular by use of Hirota's direct (or bilinearization) method \cite{Mats12b} and, via a Riemann-Hilbert problem, in \cite{Zhang+Tian23}. In the present work, for the first 
time, according to the best of our knowledge, we have reached all 
relevant types of soliton solutions by using a Darboux transformation method.
\vspace{.2cm}

Multi-component generalizations of the coupled Fokas-Lenells 
equations have been considered in \cite{Guo+Ling12,Gerd+Ivan21}. We expect that our approach can be extended to also treat such cases. 
\vspace{.2cm} 

The Fokas-Lenells example, treated in this work, motivates to explore the Miura transformation (\ref{Miura}) also in case of other bidifferential calculi in a similar way. 

\small

\normalsize

\end{document}